\newif\iffigures
\newtheorem{assumption}{Assumption}[section]
\newcommand{\tr}{^{\mathrm T}}
\newcommand{\df}{\doteq}
\newcommand{\magn}[1]{\left\vert #1 \right\vert}
\newcommand{\vp}{\mathbf{p}}
\newcommand{\vpq}{\mathbf{p}\alpha}
\newcommand{\vv}{\mathbf{v}}
\newcommand{\vS}{\mathbf{S}}
\newcommand{\cM}{\mathcal{M}}
\newcommand{\cP}{\mathcal{P}}
\newcommand{\Spq}{S_{\vp{\alpha}}}
\newcommand{\vw}{\mathbf{w}}
\newcommand{\ve}{\mathbf{e}}
\newcommand{\Ap}{\mathcal{A}(\vp)}
\newcommand{\cA}{\mathcal{A}}
\begin{document}

\title{Supervisory Output Prediction for Bilinear Systems by Reinforcement Learning\thanks{This work is part of the mpcEnergy project which is supported within the program Regionale Wettbewerbsf\"{a}higkeit O\"{O} 2007-2013 by the European Fund for Regional Development as well as the State of Upper Austria. The research reported in this article has been (partly) supported by the Austrian Ministry for Transport, Innovation and Technology, the Federal Ministry of Science, Research and Economy, and the Province of Upper Austria in the frame of the COMET center SCCH.}\thanks{An earlier version of part of this paper appeared
in \cite{ChasparisNatschlaeger14_MSC}.}}


\titlerunning{Supervisory Output Prediction for Bilinear Systems by Reinforcement Learning}        

\author{Georgios C. Chasparis \and Thomas Natschl\"{a}ger
}


\institute{Software Competence Center Hagenberg GmbH, Softwarepark 21, A-4232 Hagenberg, Austria \\
\email{\{georgios.chasparis,thomas.natschlaeger\}@scch.at} 
}


\maketitle

\begin{abstract}
Online output prediction is an indispensable part of any model predictive control implementation, especially when simplifications of the underlying physical model have been considered and/or the operating conditions change quite often. Furthermore, the selection of an output prediction model is strongly related to the data available, while designing/altering the data collection process may not be an option. Thus, in several scenarios, selecting the most appropriate prediction model needs to be performed during runtime. To this end, this paper introduces a supervisory output prediction scheme, tailored specifically for input-output stable bilinear systems, that intends on automating the process of selecting the most appropriate prediction model during runtime. The selection process is based upon a reinforcement-learning scheme, where prediction models are selected according to their prior prediction performance. An additional selection process is concerned with appropriately partitioning the control-inputs' domain in order to also allow for switched-system approximations of the original bilinear dynamics. We show analytically that the proposed scheme converges (in probability) to the best model and partition. We finally demonstrate these properties through simulations of temperature prediction in residential buildings.
\end{abstract}

\section{Introduction}	\label{sec:Introduction}

Bilinear systems play a significant role in modeling several physical and engineering processes, such as population dynamics of biological species \cite{Mohler70}, 
communication systems \cite{Pardalos08}, chemical processes \cite{Favoreel99}, and thermal dynamics in buildings \cite{ChasparisNatschlaeger16}. Due to their numerous applications, system identification for bilinear systems has attracted considerable attention, including: a) \emph{prediction-error methods} 
\cite{Inagaki84,Schrempf01}, b) \emph{maximum likelihood methods} 
\cite{Fnaiech87,Verdult02,Gibson05}, and c) \emph{subspace identification methods} \cite{Favoreel99,Verdult02b}, 
\cite{LoperDosSantos09}. 

When predictions of a system's response needs to be provided \emph{online} (as in a model predictive control implementations), the identification process selected, whether it is (a), (b) or (c), will operate frequently during runtime. This choice is supported by the fact that i) \emph{changes in the operating conditions of the system may degrade the performance of the identification process}, and ii) \emph{designing the data collection process may not be an option}. 
Providing that predictions are requested often, the following questions naturally arise: a) \emph{Is there a prediction model that better approximates the system's output during runtime?} and b) \emph{Will a switching strategy between alternative prediction models provide a better performance in the long-run?}

The goal of this paper is to automate the process for providing answers to the above questions through a \emph{supervisory approach for online prediction}. Such supervisory scheme does not depend on the specific class of the prediction models used (e.g., Volterra series expansions or subspace methods), instead, switches between the available ones and selects the most appropriate during runtime. The objective is to \emph{always balance between short training times and good predictions under any possible operating conditions.}

The supervisory process will be specifically tailored for input-output stable bilinear systems. In particular, it comprises two parallel decision processes that run periodically on a fixed period. The \emph{first} one is concerned with the selection of an appropriate partition of the input(s) domain, 
while the second one is concerned with the selection of the most appropriate prediction model for each one of the resulting partition sets. The resulting prediction model constitutes a \emph{switched} system, where a different prediction model applies to each partition set of the input(s) domain. \emph{We show analytically that the proposed scheme is adaptive and robust to changes in the performance of the prediction models, while convergence is attained (in probability) to the best partition and model.}

The idea of representing a nonlinear system by a switched system fits to the linear parameter-varying (LPV) systems framework \cite{ShammaAthans91}, and the piecewise-linear approach of \cite{Sontag81}. It has also been considered for identification of nonlinear systems as in the off-line multiple-model approaches of \cite{Venkat03,Thiaw07,Orjuela13}, where a family of models is considered, each of which applies with different weight to each cluster of the state space. Contrary to this line of research, in this paper a) the weights with which each model participates in a subregion of the inputs domain will emerge \emph{dynamically}, depending on its performance during runtime, 
and b) the proposed approach is not necessarily restricted to linear models as in \cite{Orjuela13}. Lastly, the idea of supervisory processes in controls is also not new, including, for example, the switching supervisory controller \cite{Alshyoukh09} for stabilizing unstable linear systems. The intention though here is different since we are concerned with online prediction for bilinear systems. The paper extends prior work of the authors \cite{ChasparisNatschlaeger14_MSC} by admitting weaker assumptions in the derivation of the convergence properties.


In the remainder of the paper, Section~\ref{sec:FrameworkObjective} describes the framework and objective. Section~\ref{sec:SwitchingSupervisoryID} introduces the proposed approach of the switching supervisory prediction scheme for bilinear systems. Section~\ref{sec:Convergence} provides the convergence properties of the supervisory scheme, including an evaluation of the supervisory approach through simulations of the indoor temperature in residential buildings. Section~\ref{sec:TechnicalDerivation} provides the technical details of the convergence analysis. Finally, Section~\ref{sec:Conclusions} presents concluding remarks.

\emph{Notation:} Throughout the paper, we use the notation:
\begin{itemize}
 \item for $\mathbf{x}\in\mathbb{R}^{n}$, $\mathbf{x}[j]$ denotes its $j$th entry, $j=1,...,n$;
 \item for $A\subset\mathbb{R}^{n}$ and $\delta>0$, define the $\delta$-neighborhood of $A$ as:
 $$\mathcal{N}_{\delta}(A) \df \{\mathbf{y}\in\mathbb{R}^{n}:\inf_{\mathbf{x}\in{A}}\|\mathbf{x}-\mathbf{y}\|_2\leq\delta\},$$ where $\|\cdot\|_2$ denotes the Euclidean distance;
 \item $\mathbf{\Delta}(n)\subset\mathbb{R}^{n}$ denotes the \textit{probability simplex} in $\mathbb{R}^{n}$ defined as follows: $$\mathbf{\Delta}(n)\df\Big\{\mathbf{x}=(x_1,...,x_n)\in\mathbb{R}^{n}_+:\sum_{i=1}^{n}x_i = 1\Big\};$$
 \item ${\rm rand}_{\rm unif}[a_1,...,a_n]$ denotes the random selection of an element of the set $\{a_1,...,a_n\}$ according to the finite uniform distribution $\mathbf{x}=(\nicefrac{1}{n},...,\nicefrac{1}{n})$;
 \item for a finite set $A$, $\magn{A}$ denotes its cardinality.
\end{itemize}

\section{Framework \& Objective}		\label{sec:FrameworkObjective}

\subsection{Framework}

We consider the problem of \emph{online output prediction for bilinear systems} that admit the generic state-space form:
\begin{eqnarray}	\label{eq:BilinearDynamics}
\dot{\mathbf{x}} 	& = & \mathbf{A} \mathbf{x} + \sum_{i=1}^{m}u_i\left(\mathbf{B}_i\mathbf{x} + \mathbf{D}_i\mathbf{d}\right) + \mathbf{D} \mathbf{d}, \cr
\mathbf{y} 		& = & \mathbf{C} \mathbf{x}
\end{eqnarray}
where $\mathbf{x}\in\mathcal{X}$ is the state vector, $u_i\in \mathcal{U}_i$, $i=1,...,m$, are the control inputs,  $\mathbf{d}\in\mathcal{D}$ is the disturbance vector, and $\mathbf{y}\in\mathcal{Y}$ is the output vector. Furthermore, the matrices $\mathbf{A},\mathbf{B}_i$, $i=1,...,m$, $\mathbf{C}$, $\mathbf{D}$ and $\mathbf{D}_i$, $i=1,...,m$, are constant real matrices of appropriate dimension. The domain of a control input $\mathcal{U}_i$, $i=1,...,m$, will be assumed to be a \emph{bounded closed subset} of $\mathbb{R}$. Lastly, the Cartesian product of $\{\mathcal{U}_i\}$ will be denoted by $\mathcal{U}\df \mathcal{U}_1\times\ldots\times\mathcal{U}_m$. 

We will be concerned with systems of the form (\ref{eq:BilinearDynamics}) that are \textit{bounded-input bounded-output stable}, and thus stabilization will not be a concern in this paper. Dynamics~(\ref{eq:BilinearDynamics}) provide an assumed approximation of the physical system without taking into account possible (noise) perturbations in the dynamics or in the measurements. Such perturbations can be considered when designing models for output prediction.

\subsection{Questions}		\label{sec:Questions}

Input-output pairs are recorded periodically at time instances $0,T_s,2T_s,...$, for some \emph{sampling period} $T_s>0$, briefly denoted by $\tau_0,\tau_1,\tau_2,...$, respectively. At any $\tau_j$, $j=0,1,...$, the updated history of input-output pairs is: $$H_{j}\df\{(\mathbf{u}(\tau_0),\mathbf{y}(\tau_0)),(\mathbf{u}(\tau_1),\mathbf{y}(\tau_1)),...,(\mathbf{u}(\tau_j),\mathbf{y}(\tau_j))\}.$$ Let also $\mathcal{H}_{j}$ be the family of such histories up to the sampling instance $\tau_j$. 

The problem of output prediction at time $\tau_j$ translates into the formulation of an estimate $\hat{\mathbf{y}}(\tau_{j+1})$ of the output for the next sampling instance, $\mathbf{y}(\tau_{j+1})$. Such estimates are formulated using a mapping of the form $S:\mathcal{H}_{j}\to\mathcal{Y}$, such that $S(H_{j})=\hat{\mathbf{y}}(\tau_{j+1})$.

\emph{\textbf{For the remainder of the paper}}, we assume that a set $\cM$ of user-defined prediction models is available to formulate predictions. Ideally, \emph{we would like to find which prediction model $S_j^*$ out of the available set $\cM$, would be the one that minimizes the prediction error over the next $N$ sampling instances}. In other words, we would like to solve the following optimization problem: 
\begin{equation}	\label{eq:Question1}
S_j^*=\arg\min_{S\in\cM}\sum_{q=0}^{N-1}\left\|\mathbf{y}(\tau_{j+q+1})-S(H_{j+q})\right\|_2^{2}.
\end{equation}

Of course, the above optimization is not well-posed, since the actual measurements $\mathbf{y}(\tau_{j+1})$,..., $\mathbf{y}(\tau_{j+N})$, over the next $N$ time instances, are not known at time $\tau_j$. \emph{In practice, we may only assess the performance of a model over the next $N$ instances, by evaluating its performance over the prior history, i.e., using only the realized $H_j$.} We should expect that a model that performed well over $H_j$, will continue to perform well for the upcoming $N$ time instances $\tau_{j+1}$,...,$\tau_{j+N}$. However, this is simply a hypothesis that needs investigation.

Instead, let us consider the following optimization, 
\begin{equation}	\label{eq:InitialOptimizationProblem}
\hat{S}_j = \arg\min_{S\in\cM} \sum_{q\in Q_j\subseteq\{-j,...,-1\}} \left\| \mathbf{y}(\tau_{j+q+1}) - S(H_{j+q})\right\|^2_2,
\end{equation}
where we evaluate the performance of the models over a sample $Q_j$ of only \emph{prior} observations in $H_j$. Ideally, we would like $\hat{S}_j\equiv S_{j}^*$ at all times, so that by evaluating the prediction performance over prior data, we get the best model for the next $N$ time instances. Thus, naturally, the following question emerges: 
\begin{question}	\label{Q:Question2}
Under which conditions (i.e., prior history $H_j$ and sample set $Q_j$) shall we expect that $\hat{S}_j \equiv S_{j}^*$ for all instances $j=0,1,...$?
\end{question}
Essentially, Question~2.2 asks whether it is reasonable to assume that a model that provided the best predictions so far (model $\hat{S}_j$) will still be the best model for the next evaluation period (which corresponds to model $S_{j}^*$). This is definitely not a trivial question to answer mainly due to the \emph{varying operating conditions}. When the underlying physical system is nonlinear, variations in the operating conditions may result in poor performance of several prediction models in $\cM$ (e.g., due to the lack of prior data from such operating conditions).

Assuming that Question~\ref{Q:Question2} can be answered (i.e., we may indeed use prior performances as a valid assessment criterion over next performances), the optimization (\ref{eq:InitialOptimizationProblem}) may not necessarily correspond to a computationally efficient optimization problem. That is primarily due to a) all models in $\cM$ need to be evaluated over the prior history $H_j$, and b) all models in $\cM$ need to be retrained as soon as new input-output pairs become available. These remarks give rise to the following question:
\begin{question}	\label{Q:Question3}
Alternatively to performing an exhaustive search for computing the best model in $\cM$, as the optimization problem~(\ref{eq:InitialOptimizationProblem}) dictates, are there computationally efficient methods for computing/updating $\hat{S}_{j}$ for each $j=0,1,...$?
\end{question}

This is a reasonable question, since, in several applications, the amount of data available continuously grow with time. Thus, it may be prohibited to evaluate/retrain all available models in $\cM$ at every evaluation instance $\tau_j$. In this case, it would be desirable that we provide a pattern over which models are evaluated/trained that does not require an exhaustive search approach as (\ref{eq:InitialOptimizationProblem}) dictates.


\subsection{Motivation: thermal dynamics in buildings}		\label{sec:Motivation}

To motivate further the above questions, we will discuss them within the context of a specific application, namely the problem of \emph{temperature prediction in residential buildings} (cf.,~\cite{ChasparisNatschlaeger16}). As presented and analyzed in \cite{ChasparisNatschlaeger16}, the thermal-mass dynamics describing the indoor-temperature evolution of a residential building corresponds to a bilinear system, since the control-input (i.e., the flow of the thermal medium) appears in products with either state and disturbance variables of the system. It is rather common to assume a linear approximation of the model, and employ linear transfer models for system identification and prediction (see, e.g., the MIMO ARMAX model in \cite{YiuWang07}, the MISO ARMAX model for HVAC systems in \cite{Scotton13} and the ARX model structure considered in \cite{Malisani10}). Depending on the operating conditions, this might be a good approximation, however variations in the control and disturbance variables may lead to bad approximations. In such cases, we need to either retrain our model or replace it with a more detailed one (see, e.g., the nonlinear regression models developed in \cite{ChasparisNatschlaeger16}).

Assuming that a model predictive controller is used to regulate the indoor temperature, we should expect that indoor-temperature predictions are requested often (e.g., every one hour). Although the use of a detailed and accurate prediction model would be the most obvious choice,  larger training times may prohibit its use from the very beginning. On the other hand, simpler models (e.g., linear transfer models) may be a more appropriate choice, especially for the beginning of the implementation, or when the operating conditions change rather often. As expected, \emph{the best model $S_j^*$ may not be unique for all $j$, since better predictions may be provided by simpler models at least at earlier times}.

\begin{figure}[t!]
\centering
\iffigures
\includegraphics[scale=0.8]{./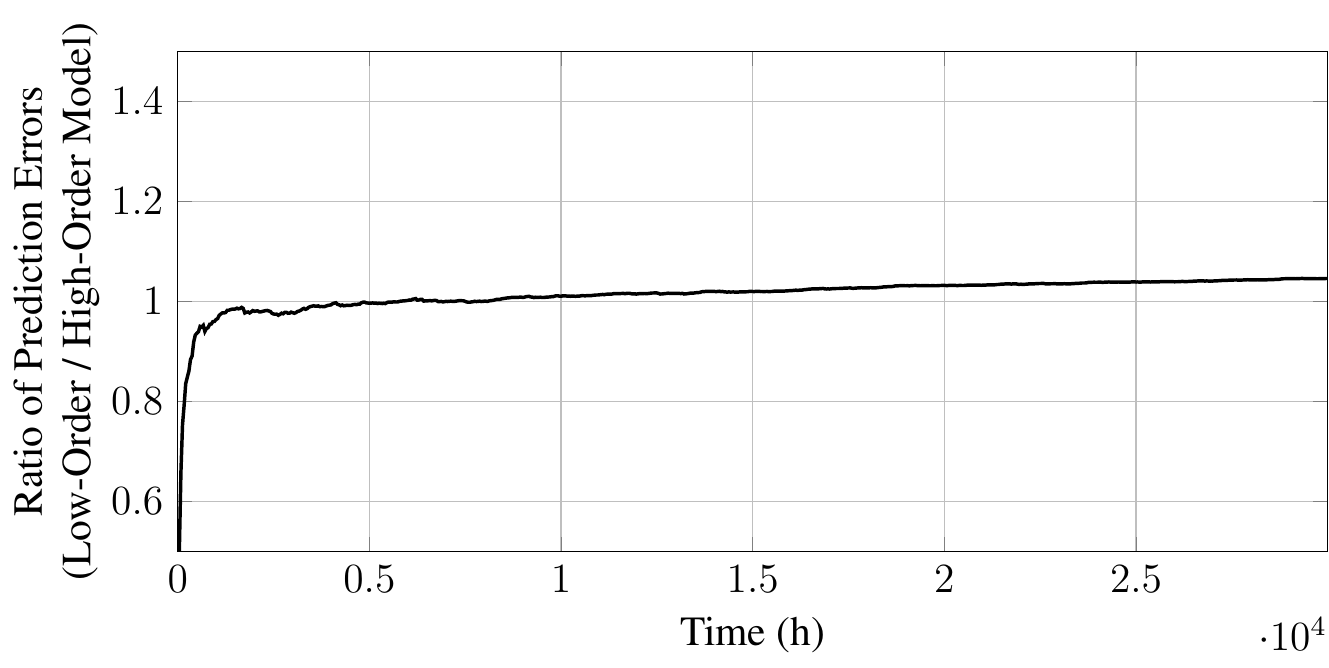}
\fi
\caption{Running average ratio of the squared prediction error between a low- and high-order model when used for indoor temperature prediction in a residential building.}
\label{fig:LinearModelsComparison}
\end{figure}

To demonstrate the validity of this argument (i.e., the fact that training times may be different among different models), Figure~\ref{fig:LinearModelsComparison} shows the ratio of the prediction error between a low- and high-order linear transfer model. The low-order model corresponds to a second-order output-error regression model, while the high-order model corresponds to a third-order output-error regression model (see, \cite[Section~5.4]{ChasparisNatschlaeger16} for the construction of such output-error models as well as the data collection process). Apparently, high-order models may exhibit low prediction accuracy at earlier times and high prediction accuracy at later times, due to the fact that more data need to be used for training such models (compared to lower-order models). 

From this example, we verify that the best model $S_j^*$ may change with time (either due to the different training times of the available models or due to varying operating conditions). Hence, we need a computationally efficient way that provides an answer to the optimization problem (\ref{eq:InitialOptimizationProblem}) that runs frequently and as soon as new data become available. Ideally, we would like to avoid evaluating and retraining \emph{all} available models each time new measurements become available, since this will impose a large computational burden to the proposed methodology.

\subsection{Objective}

\emph{\textbf{The goal of this paper is to provide an answer to Questions~\ref{Q:Question2}--\ref{Q:Question3}}}, i.e., we wish to provide a methodology that a) \emph{computes $\hat{S}_j$ in a computationally efficient way that is adaptive to variations in the performances of the models (Question~\ref{Q:Question3})}, and b) \emph{provides guarantees under which $\hat{S}_j\equiv S_j^*$ (Question~\ref{Q:Question2})}.

In particular, we will introduce a supervisory scheme that continuously switches the selection of the prediction model (based on \emph{prior} performances of the models in $\cM$), while it only trains a model when this is selected. In parallel, a partitioning of the inputs domain will also be considered that allows for selecting a different prediction model in each partition set, thus exploiting the bilinear form of the dynamics. The supervisory scheme for switching between models will also allow for switching between partitions, which creates more possibilities with respect to the optimal prediction (switched) model.

\section{Switching Supervisory Online Prediction} \label{sec:SwitchingSupervisoryID}

\subsection{Model selection}	\label{sec:Algorithm}

We assume that a set $\cM$ of prediction models is available for generating output predictions. Predictions are requested at specific instances of period $T_e=\rho T_s$ for some $\rho \in\mathbb{N}$. Let us denote these \emph{evaluation instances}, $0,T_e,2T_e,...$, by the index $k=0,1,2,...$, respectively. At each evaluation instance $k$, the switching supervisory scheme is responsible for selecting a prediction model $S_{k+1}\in\cM$ that will be used for generating predictions over the following $N$ \emph{sampling instances} $\{\tau_{kN+1},...,\tau_{(k+1)N}\}$. In general, and depending on the application, we should expect that $\rho\leq{N}$.

The selection of the model $S_k$, for each $k=0,1,2,...$, will be based upon a probability distribution that captures our estimates for the most ``\emph{appropriate}'' model. In particular, we define a probability vector $\vv_k \in \Delta(\magn{\cM})$. For example, in case of three available models in $\cM$, a vector of the form $\vv_k=(0.2,0.1,0.7)$ represents our estimates (or \emph{beliefs}) for the most appropriate model among the three available ones. An initial uniform distribution is assumed, i.e., $\vv_{0} \df (\nicefrac{1}{\magn{\cM}},...,\nicefrac{1}{\magn{\cM}})$.

The specifics of the supervisory algorithm are presented in Table~\ref{Tb:SOPA_ModelSelection}.

\begin{table}[t!]
\boxed{\small
\begin{minipage}{0.95\textwidth}
\vspace{5pt}
{\bf Algorithm 1.} \emph{Model Selection}\\[-2pt]
\noindent\rule{\textwidth}{0.4pt}\\[5pt]
At any evaluation instance $k=1,2,...$, the following steps are executed recursively:
\begin{enumerate}
\item (\emph{\textbf{collection}}) We update the history of sampled input-output pairs, i.e.,
$$H_{kN} \df H_{(k-1)N}\bigcup_{j=(k-1)N+1}^{kN} (\mathbf{u}(\tau_{j}),\mathbf{y}(\tau_j)).$$

\item (\emph{\textbf{evaluation}}) We compute the performance of the currently selected prediction model $S_k$ as follows:
\begin{eqnarray}
r_{k}(S_{k}) \df \left( \frac{1}{N} \sum_{j=(k-1)N}^{kN-1} \left\|\mathbf{y}(\tau_{j+1}) - \hat{\mathbf{y}}_k(\tau_{j+1})\right\|^2_2 \right)^{-1}
\end{eqnarray}
where $\hat{\mathbf{y}}_k(\tau_{j+1}) \df S_k(H_{j})$ is the prediction of the model $S_{k}$ given history $H_{j}$.

\item (\emph{\textbf{update}}) We revise the probability vector $\vv_{k}$:
\begin{equation}	\label{eq:ModelsStrategiesUpdates}
\vv_{k+1} = \vv_{k} + \epsilon \cdot r_{k}(S_{k}) \cdot \left[ \ve_{S_{k}} - \vv_{k} \right],
\end{equation}
for some constant $\epsilon>0$, where $\ve_{S_k}$ is the unit vector of size $\magn{\cM}$ such that the entry corresponding to model $S_k$ is 1 and all others are 0.

\item (\emph{\textbf{selection}}) We update the prediction model as follows:
\begin{eqnarray*}
S_{k+1} = \begin{cases}
	{\rm rand}_{\vv_{k+1}}[\cM], & {\rm w.p.} \quad 1-\lambda \\
    {\rm rand}_{\rm unif}[\cM], & {\rm w.p.} \quad \lambda	
	\end{cases}
\end{eqnarray*}
for some small perturbation factor $\lambda \in (0,1)$.

\item (\emph{\textbf{training}}) We train the parameters of the newly selected model, i.e., $$S_{k+1} \leftarrow \mathcal{T}(S_{k+1}|{H}_{kN}),$$ where $\mathcal{T}$ is a (user-defined) training operator (possibly different for each model in $\cM$). 

\end{enumerate}
\end{minipage}
}
\caption{Algorithm 1: Switching algorithm for model selection.}
\label{Tb:SOPA_ModelSelection}
\end{table}

Note that in Step 2 (\emph{evaluation}), the performance of $S_{k}$ is computed as the inverse average prediction error over the previous evaluation interval. In Step 3 (\emph{update}), given the performance of model $S_{k}$, we update the probability distribution $\vv_k$ that holds our estimates for the most appropriate model. For example, if $S_{k}$ corresponds to the $j$th entry in $\vv_k$, then $\vv_{k}$ is going to increase in the $j$th direction by $\epsilon \cdot r_{k}(S_{k}) \cdot (1-\vv_{k}[j])>0$ and decrease in any other direction $i\neq{j}$ by $\epsilon \cdot r_{k}(S_{k}) \cdot \vv_{k}[i]$. In other words, \emph{$\vv_k$ is reinforced in the direction of prior selections and proportionally to the performance of the selected models}. In Step 4 (\emph{selection}), a new model is selected, $S_{k+1}$, that will be used for the predictions over the next evaluation interval. The selection is based upon our updated estimate vector $\vv_{k+1}$. Finally, at the last step, the selected model $S_{k+1}$ will be trained using the updated history ${H}_{kN}$. In other words, only the model selected at Step~4 is trained at each evaluation instance $k$. 

The recursion (\ref{eq:ModelsStrategiesUpdates}) governs the asymptotic properties of the model selection and will be discussed in a forthcoming section as part of a more general scheme. This class of dynamics corresponds to a discrete-time analog of the replicator dynamics (cf.,~\cite[Chapter~7]{HofbauerSigmund98}), introduced in \cite{ChasparisShamma11_DGA,ChasparisShammaRantzer14_IJGT} under a game formulation. The difference here is that the performance received for a given selection of models may not be fixed with time since it depends on the history $H_{kN}$.

\subsection{Bilinearity and switched-model approximation}

The supervisory prediction scheme of Table~\ref{Tb:SOPA_ModelSelection} is independent of the specifics of the system dynamics. In this section, we wish to exploit the bilinear structure of the considered dynamics to improve the prediction performance. 

Note, that bilinear systems can well be approximated by linear systems under the assumption that the control input(s) are sufficiently constant. We introduce a partitioning of the domain $\mathcal{U}_i$ for each one of the input variables $u_i$. A finite set of partition patterns is considered for each input $u_i$, denoted by $\cP_i$. Let $p_i$ be an enumeration of the patterns in $\mathcal{P}_i$, i.e., $p_i=1,2,...,\magn{\mathcal{P}_i}$. A partition pattern may correspond to dividing the domain equally into two parts, three parts, etc., or any other (predefined) pattern. Given the selected partition pattern $p_i$ for each control input $u_i$, we formulate the \emph{partition-pattern profile} $\vp=(p_1,p_2,...,p_m)$ for all input variables. Figure~\ref{fig:PartitioningExample} demonstrates the outcome of such partition profile $\vp$ in case of two inputs $u_1$ and $u_2$ where the selected partition patterns $p_1$ and $p_2$ correspond to dividing the domain into two subsets.

The partition-pattern profile $\vp=(p_1,p_2,...,p_m)$ divides the Cartesian product $\mathcal{U}$ into a collection of subsets for the combined input variables. For example, in Figure~\ref{fig:PartitioningExample} such \emph{subset combinations} include $\mathcal{U}_{11}\times\mathcal{U}_{21}$, $\mathcal{U}_{11}\times\mathcal{U}_{22}$, $\mathcal{U}_{12}\times\mathcal{U}_{21}$, and $\mathcal{U}_{12}\times\mathcal{U}_{22}$. Let us denote the set of these \emph{subset combinations} by $\cA=\Ap$. To distinguish between the generated subset combinations, we introduce an enumeration $\alpha=\alpha(\vp)$ of the elements of $\cA$. To simplify notation, $\alpha$ will also denote the corresponding subset, thus $\mathbf{u}\in\alpha$ translates into the control input belonging to the subset corresponding to $\alpha$.

\begin{figure}[t!]
\centering
\iffigures
\begin{minipage}{0.45\textwidth}
\includegraphics[scale=0.8]{./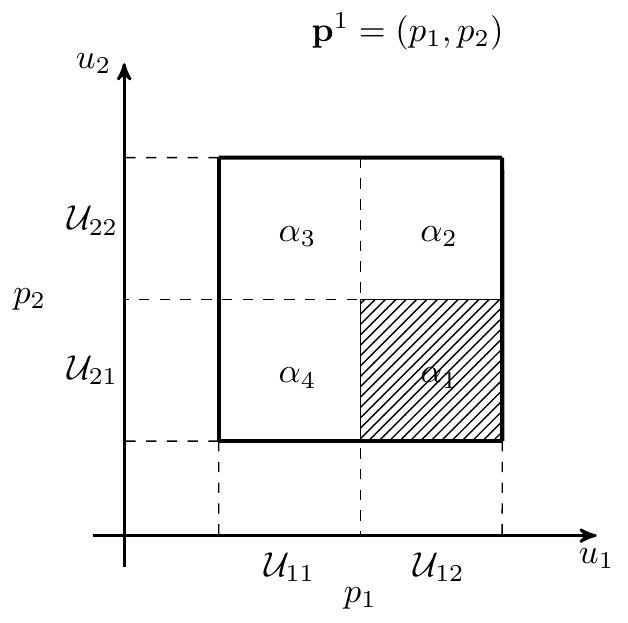}
\end{minipage}
\begin{minipage}{0.45\textwidth}
\includegraphics[scale=0.8]{./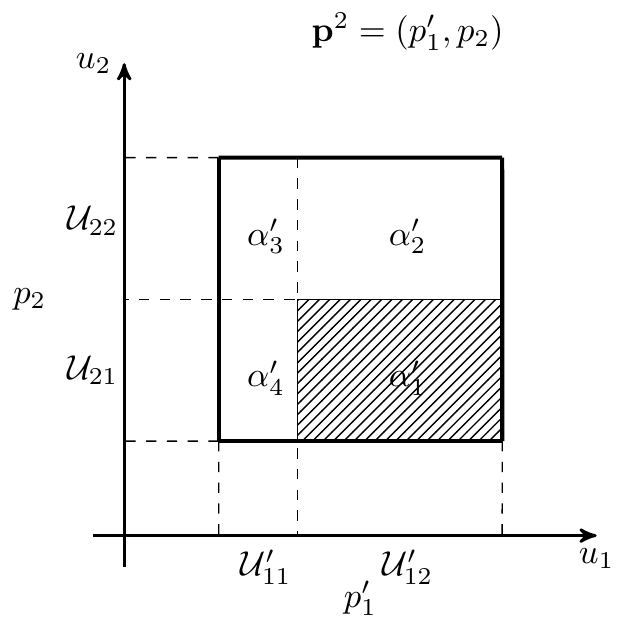}
\end{minipage}
\fi
\caption{Two partition-pattern profiles, $\vp^{1}$ and $\vp^2$, which differ in the way the domain $\mathcal{U}_1$ of control input $u_1$ is partitioned.}
\label{fig:PartitioningExample}
\end{figure}

Naturally, the enumeration $\alpha=\alpha(\vp)$ may be used to define a switched-system for the prediction of the output (\ref{eq:BilinearDynamics}). In particular, for each subset $\alpha$, we may assign a prediction model that only applies within this subset $\alpha$, leading to a switched prediction model. 

In other words, under a partition pattern $\vp$, and for each subset $\alpha\in\cA(\vp)$ generated from this partition, we assign a prediction model $S_{\vp\alpha,k}$ for each evaluation interval $k$. Then, the prediction at time $\tau_j$, for $j\in\{kN,...,(k+1)N-1\}$, can be expressed by 
\begin{equation}	\label{eq:SwitchedModel}
\hat{y}(\tau_{j+1}) = S_{\vp\alpha,k}(H_{j}),
\end{equation}
where $\alpha$ is such that $\mathbf{u}(\tau_j)\in\alpha$.

In this case, we may record the performance of a prediction model separately for each subset $\alpha$, defined as follows:
\begin{eqnarray}	\label{eq:UpdatedEvaluationFunction}
r_{\vpq,k}(S_{\vpq,k}) \df \left( \nicefrac{1}{\eta(k)} \sum_{j=(k-1)N}^{kN-1} \|\mathbf{y}(\tau_{j+1}) - \hat{\mathbf{y}}_k(\tau_{j+1})\|^2_2 \mathbb{I}_{\mathbf{u}(\tau_j)\in\alpha} \right)^{-1}
\end{eqnarray}
where $\eta(k) \df \sum_{j=(k-1)N}^{kN-1}\mathbb{I}_{\mathbf{u}(\tau_j)\in\alpha}$ and $\mathbb{I}_{\mathbf{u}(\tau_j)\in\alpha}$ is the index function. Accordingly, we may assign a separate strategy vector for each one of these subsets, denoted $\vv_{\vp\alpha}$, which can be updated (similarly to (\ref{eq:ModelsStrategiesUpdates}) as follows: 
\begin{equation}	\label{eq:ModelsStrategiesUpdatesB}
\vv_{\vpq,k+1} = \vv_{\vpq,k} + \epsilon \cdot r_{\vpq,k}(S_{\vpq,k}) \cdot \left[ \ve_{S_{\vpq,k}} - \vv_{\vpq,k} \right],
\end{equation}
for each $\vp$ and $\alpha\in\Ap$. 

Thus, the model selection process described by Algorithm~1 (Table~\ref{Tb:SOPA_ModelSelection}) can now be applied \emph{separately} for each one of the subsets in $\mathcal{A}(\vp)$.

\subsection{Partition selection}

\begin{table}[t!]
\boxed{\small
\begin{minipage}{0.95\textwidth}
\vspace{5pt}
{\bf Algorithm 2.} \emph{Partition Selection}\\[-2pt]
\noindent\rule{\textwidth}{0.4pt}\\[5pt]
At any evaluation instance $k=1,2,...$, the following steps are executed recursively:
\begin{enumerate}
\item (\emph{\textbf{collection}}) We update the history of sampled input-output pairs, i.e.,
$$H_{kN} \df H_{(k-1)N}\bigcup_{j=(k-1)N+1}^{kN} (\mathbf{u}(\tau_{j}),\mathbf{y}(\tau_j)).$$

\item (\emph{\textbf{evaluation}}) We compute the performance of the selected partition $\vp_k$,
\begin{equation}	\label{eq:PartitionPatternPerformance}
R_k(\vp_k,\vS_{k}) \df \sum_{\alpha\in\Ap} r_{\vpq,k}(S_{\vpq,k}),
\end{equation}
i.e., it aggregates the performances of all subsets $\alpha=\alpha(\vp)$ generated from $\vp$. We set it as a function of the model profile, $\vS_k\df\{\vS_{\vp,k}\}_{\vp\in\cP}$, i.e., the collection of models currently selected over all partition patterns. 

\item (\emph{\textbf{update}}) We revise the strategy vector over partition profiles $\vw_{k}$:

\begin{eqnarray}	\label{eq:PartitionsStrategyUpdates}
\vw_{k+1} = \vw_k + \epsilon \cdot R_{k}(\vp_k,\vS_{k}) \cdot [\ve_{\vp_k}-\vw_k],
\end{eqnarray}
with $\vw(0)\df(\nicefrac{1}{\magn{\cP}},...,\nicefrac{1}{\magn{\cP}})$.

\item[3.] (\emph{\textbf{selection}}) We update the partitions as follows:
\begin{eqnarray*}	
  \vp_{k+1} = \begin{cases}
    {\rm rand}_{\vw_{k+1}}[\cP], & \mbox{w.p.} \quad 1-\lambda, \\
    {\rm rand}_{\rm unif}[\cP], & \mbox{w.p.} \quad \lambda,
  \end{cases}
\end{eqnarray*}
for some small perturbation factor $\lambda \in (0,1)$.


\end{enumerate}
\end{minipage}
}
\caption{Algorithm 2: Switching algorithm for partition selection.}
\label{Tb:SOPA_PartitionSelection}
\end{table}

Given that the partition-pattern profile $\vp$ may not be unique, we also wish to provide an answer to \emph{which partition-pattern may be more appropriate for the switched-model prediction}. Providing a \emph{predefined} finite set of partition-pattern profiles $\cP=\cP_1\times\cdots\times\cP_m$, we also implement an update of the partition-pattern profile $\vp_k\in\cP$ at each evaluation instance $k$, as described by Algorithm~2 in Table~\ref{Tb:SOPA_PartitionSelection}. 

In particular, we introduce a probability (or \emph{strategy}) vector $\vw_k\in\Delta(\magn{\cP})$ that holds our estimates for the most appropriate partition-pattern profile $\vp\in\cP$ (similarly to the use of $\vv_{\vpq,k}$ for the most appropriate model in $\alpha$). Then, at every evaluation instance $k$, we record the overall performance of the partition, as this is defined in (\ref{eq:PartitionPatternPerformance}), and we update the strategy vector over partitions according to (\ref{eq:PartitionsStrategyUpdates}). 

\subsection{Joint model \& partition selection}

Note that either Algorithm~1 for model selection or Algorithm~2 for partition selection can be implemented independently or jointly. In particular, Algorithm~1 may be implemented in an unpartitioned domain, or it may be applied for each subset $\alpha$ of a partitioned domain $\vp$. Similarly, Algorithm~2 may be applied assuming that the prediction model in each subset is fixed, or it may be applied jointly with the model selection process.

In the remainder of this paper, we will consider the most general case, which is a joint execution of Algorithm~1 and Algorithm~2 at each evaluation interval $k=1,2,...$. The joint execution is described in detail in Table~\ref{Tb:SOPA_JointSelection}. It is important to note that the joint execution is governed by Algorithm~2 which determines the currently selected partition-pattern. Then, for the currently selected pattern (and only for this one), Algorithm~1 is also implemented in \emph{each} one of its subsets.

\begin{table}[t!]
\boxed{\small
\begin{minipage}{0.95\textwidth}
\vspace{5pt}
{\bf Algorithm 3.} \emph{Joint Partition-Model Selection}\\[-2pt]
\noindent\rule{\textwidth}{0.4pt}\\[5pt]
At any evaluation instance $k=1,2,...$, let $\vp_k$ be the currently selected pattern, and $\vS_{\vp_k,k}$ be the currently selected models for the subsets of the selected profile.
\begin{enumerate}
\item[(a)] Run Algorithm~2, i.e., \emph{evaluate} pattern $\vp_k$, \emph{update} the corresponding strategy vector $\vw_k$ over patterns, and \emph{select} a new pattern, $\vp_{k+1}$, which will be applied in the new evaluation interval, $k+1$.

\item[(b)] Run Algorithm~1 \emph{for each subset $\alpha\in\mathcal{A}(\vp_k)$ of the selected pattern}, i.e., for each subset $\alpha\in\mathcal{A}(\vp_k)$, \emph{evaluate} the selected model $S_{\vp_k\alpha,k}$ and \emph{update} the strategy vector $\vv_{\vp_k\alpha,k}$ over models. Finally, \emph{select} new models $\vS_{\vp_{k+1},k+1}$, for the newly selected pattern $\vp_{k+1}$ (which may \emph{not} coincide with the previous one $\vp_k$). 

\end{enumerate}
\end{minipage}
}
\caption{Algorithm 3: Switching algorithm for joint partition-model selection.}
\label{Tb:SOPA_JointSelection}
\end{table}

\subsection{Illustration}	\label{sec:Illustration}

\subsubsection{Setup}

Let us consider a simple example, where $\mathcal{M}=\{S^1,S^2\}$ is the set of two prediction models, and $\mathcal{P}=\{\mathbf{p}^1,\mathbf{p}^2\}$ are two partition-pattern profiles of the input domain, (see, e.g., Figure~\ref{fig:PartitioningExample}). Both sets $\cM$ and $\cP$ have been selected/constructed by the user.

The user formulates a vector of strategies/probabilities over the available partition profiles, 
\begin{eqnarray*}
\vw_0 = \left(\begin{array}{c}
\mbox{\rm prob. of selecting partition $\vp^1$}\\
\mbox{\rm prob. of selecting partition $\vp^2$}
\end{array}\right) = \left(\begin{array}{c}
\nicefrac{1}{2}\\
\nicefrac{1}{2}
\end{array}\right).
\end{eqnarray*}
At the beginning, we may assume an equal probability assigned to each partition, since we do not have any a-priori knowledge regarding which might be the best selection.

Each partition-pattern $\vp^1$ and $\vp^2$, has divided the control-input domain into a collection of subsets, indicated by the enumerations $\alpha=\alpha(\vp^1)\in\{\alpha_1,\alpha_2,\alpha_3,\alpha_4\}$ and $\alpha'=\alpha'(\vp^2)\in\{\alpha_1',\alpha_2',\alpha_3',\alpha_4'\}$, respectively. Each of these indices, corresponds to a different subset in the control domain as shown in Figure~\ref{fig:PartitioningExample}.

For each partition-pattern profile, say $\vp^1$, and for each subset generated from that, $\alpha$, we create a vector of strategies, $\vv_{\vp^1\alpha}$, representing our beliefs of what would be the most appropriate model for this subset. Since two models are available $\cM = \{S^1,S^2\}$, we set
\begin{eqnarray*}
\vv_{\vp^1\alpha,0} = \left(\begin{array}{c}
\mbox{\rm prob. of selecting system $S^1$}\\
\mbox{\rm prob. of selecting system $S^2$}
\end{array}
\right) = \left(\begin{array}{c}
\nicefrac{1}{2}\\
\nicefrac{1}{2}
\end{array}\right), \ \ \alpha\in\cA(\vp)\equiv\{\alpha_1,\alpha_2,\alpha_3,\alpha_4\}.
\end{eqnarray*}
We do the same for all the partition-patterns, thus each subset has one such vector assigned to it.

Having initialized all vectors $\vv_{\vp^1\alpha}$ and $\vv_{\vp^2\alpha'}$, we make an initial assignment, that is we assign a prediction model to each one of the subsets. Let us assume that we assign the first model, i.e., $S_{\vp^1\alpha,0} = S^1$ and $S_{\vp^2\alpha',0}=S^1$ for each subset $\alpha$, $\alpha'$. Let us also assume that we make an initial partition selection, $\vp_{0} = \vp^1$, i.e., we select the first partition of Figure~\ref{fig:PartitioningExample}.

\subsubsection{Evaluation \& Update}

At the next evaluation instance, $k=1$, we apply the following updates:
\begin{enumerate}
\item Algorithm 1 (Table~\ref{Tb:SOPA_ModelSelection}) for evaluating and updating the models of each subset of the \emph{selected} partition ($\vp_0=\vp^1$);
\item Algorithm 2 (Table~\ref{Tb:SOPA_PartitionSelection}) for evaluating and updating the partition selection.
\end{enumerate}
In particular, we go through each subset $\alpha$ of the selected partition $\vp^1$, and record the performance of the selected model, i.e., $r_{\vp^1\alpha,0}(S^1)$, $\alpha\in\{\alpha_1,\alpha_2,\alpha_3,\alpha_4\}$. Then, we update the strategies in each subset separately, i.e.,
\begin{eqnarray*}
\vv_{\vp^1\alpha,1} = \vv_{\vp^1\alpha,0} + \epsilon \cdot r_{\vp^1\alpha,0}(S^1) \cdot \left[\left(\begin{array}{c}1\\0\end{array}\right) - \vv_{\vp^1\alpha,0}\right] = \nicefrac{1}{2}\left(\begin{array}{c}
1 + \epsilon\cdot r_{\vp^1\alpha,0}(S^1) \\
1 - \epsilon\cdot r_{\vp^1\alpha,0}(S^1)
\end{array}\right), 
\end{eqnarray*}
Note that the strategy in each subset $\alpha$ of the selected partition $\vp^1$ will increase in the direction of system $S^1$, since this was the system selected at the previous iteration. This increase is proportional to the performance observed. Correspondingly, the probability of selecting system $S^2$, will decrease by the same quantity.

After updating the strategies over models of each subset of the selected partition $\vp^1$, we apply Algorithm~2 for evaluating/updating the partition. In particular, we first compute the aggregate performance of the partition 
\begin{equation*}
R_0(\vp_0,\vS_0) = R_0(\vp^1,\{S^1,S^1,S^1,S^1\}) = \sum_{\alpha\in\{\alpha_1,\alpha_2,\alpha_3,\alpha_4\}}r_{\vp^1\alpha,0}(S^1).
\end{equation*}
Given the recorded performance of this partition, we update the strategy vector over partitions. The updated strategy vector will be:
\begin{eqnarray*}
\vw_{1} = \vw_{0} + \epsilon \cdot R_{0}(\vp_0,\vS_0) \cdot \left[\left(\begin{array}{c}1\\0\end{array}\right) - \vw_{0}\right] = \nicefrac{1}{2}\left(\begin{array}{c}
1 + \epsilon\cdot R_{0}(\vp_0,\vS_0) \\
1 - \epsilon\cdot R_{0}(\vp_0,\vS_0)
\end{array}\right).
\end{eqnarray*}
Having computed $\vw_1$, we make a new selection for the partition (as indicated by Step~4 of Algorithm~2), say $\vp^2$, which will be applied in the next evaluation interval, and for the selected partition, we make a selection over models (for each subset) using the strategy vectors $\vv_{\vp^2\alpha,1}$. We continue likewise.

Note that a strategy vector over models is only being updated when the corresponding partition-pattern profile is being selected.

\subsection{Discussion}

The switching-supervisory scheme of Algorithm~3 (Table~\ref{Tb:SOPA_JointSelection}) provides one possibility for automating the optimal switching strategy for partition-patterns and models. The following questions naturally emerge: a) \emph{Why this type of learning dynamics is appropriate for this problem?}, b) \emph{What is the memory/computational complexity of the learning dynamics?}, and c) \emph{What are the convergence guarantees of the proposed dynamics?}

\textbf{\textit{Performance-based optimization:}} One of the most attractive features of the proposed dynamics is the fact that decisions are taken based on measurements of performance functions. Thus, no a-priori knowledge of the performance of the models/partitions is required. This property is quite appropriate for the problem of output prediction, since the performance of a model may vary significantly with time (due to varying operating conditions), as already discussed in Section~\ref{sec:Motivation}.

\textbf{\textit{Memory complexity:}} For predefined sets $\cM$ and $\cP$, the joint partition-model update of Algorithm~3 requires that we keep track of: a) $\sum_{\vp\in\cP}\magn{\mathcal{A}(\vp)}$ strategy vectors (over models) of size $\magn{\cM}$ each, and b) a single strategy vector (over partition-pattern profiles) of size $\magn{\cP}$. Thus, the requested memory size is dominated by a number that is proportional to $(c\magn{\cM}+1)\magn{\cP}$, where $c$ reflects the maximum number of subsets of a partition pattern within $\cP$. Such feature is highly attractive since no prior performances need to be preserved throughout the implementation, except for the fixed memory size of the strategy vectors. 

Note that the above memory size computation does not include the history $H_k$ of input-output pairs (since this will be required by any scheme). However, if we allow models in $\cM$ that can be trained recursively, then only the most recent input-output pairs will be required at each iteration $k$.

\textit{\textbf{Computational complexity:}} According to the joint partition/model update, at each iteration $k$, the partition strategy vector is updated according to (\ref{eq:PartitionsStrategyUpdates}). Moreover, the model strategy vectors of the selected partition are updated according to (\ref{eq:ModelsStrategiesUpdates}). Updating a partition-pattern strategy vector requires a number of basic operations that is proporional to $\magn{\cP}$, while updating a model strategy vector requires a number of basic operations that is proportional to $\magn{\cM}$. Thus, the computational complexity of the requested computations will be dominated by a number that is proportional to $c\magn{\cM}+\magn{\cP}$, where $c$ is the maximum number of subsets of a partition pattern within $\cP$. In other words, the proposed scheme admits \emph{linear} computational complexity.

The resulting minimal memory requirements and the linear computational complexity of the proposed scheme provide an indirect answer to Question~\ref{Q:Question3}.

\textit{\textbf{Convergence guarantees:}} In the forthcoming Section~\ref{sec:Convergence}, we will show that this model/partition selection process will guarantee optimal prediction performance \emph{in probability}, that is the algorithm will provide the optimal selection for an arbitrarily large portion of the implementation time. We will also provide a condition under which Question~\ref{Q:Question2} is also answered. 

In conclusion, the intention here is to balance between a) low computational complexity of the dynamics (which is an important feature for supervisory schemes that run over long time horizons), and b) convergence guarantees to the optimal choices. Further discussion on the benefits of this type of dynamics will be provided at the end of the forthcoming Section~\ref{sec:Convergence}.

\section{Convergence Analysis}	\label{sec:Convergence}

\subsection{Assumptions} \label{sec:Assumptions}

The evolution of the partition selection $\vp_k$ and the model selection $\vS_k$ is fully described by the vectors $\vw_k$ and $\{\vv_{\vpq,k}\}$, for each $\alpha(\vp_k)$, updated through recursions (\ref{eq:PartitionsStrategyUpdates}) and (\ref{eq:ModelsStrategiesUpdatesB}), respectively. \textit{\textbf{For the remainder of the paper}}, denote briefly $\vv$ to be the collection of vectors $\{\vv_{\vpq}\}_{\vp,\alpha}$. Define a canonical path space $\Omega$ with an element $\omega\in\Omega$ being a sequence $\{\vv_k,\vw_k,H_{kN}\}_k$ generated by the process. Define also $\mathfrak{F}_n$ to be a $\sigma$-algebra generated by the sequence up to time index $k\leq{n}$. Lastly, let $\mathbb{P}[\cdot]$ be a probability measure defined in $\mathfrak{F}_n$ and $\mathbb{E}_{\mathfrak{F}_n}[\cdot]$ be the expectation conditioned on $\mathfrak{F}_n$. 


The convergence behavior is subject to the following assumptions. First, we assume that \emph{the prediction performance of the models in $\cM$ converges in the mean as time grows}. Formally, let $n_{\epsilon}$ be a sequence of positive integers, such that $n_{\epsilon}\to\infty$ and $\epsilon n_{\epsilon}\to{0}$ as $\epsilon\to{0}$.
\begin{assumption}[Mean asymptotic performance]	\label{As:MeanAsymptoticPerformance}
For each partition-pattern profile $\vp\in\cP$ and each subset $\alpha\in\Ap$, there exists a function $r_{\vpq,\infty}:\cM\mapsto\mathbb{R}_+$ such that, for any sequence $n_{\epsilon}$ and for any $S\in\cM$, the following holds:
\begin{equation*}	
\lim_{j\to\infty}\lim_{\epsilon\to{0}}\frac{1}{n_{\epsilon}}\sum_{k=jn_{\epsilon}}^{(j+1)n_{\epsilon}-1}\mathbb{E}_{\mathfrak{F}_{jn_{\epsilon}}}\left[r_{\vpq,k}(S) - r_{\vpq,\infty}(S)\right]=0.
\end{equation*}
\end{assumption}
In other words, let us consider a window of implementation time of the form $\{\tau_{jn_\epsilon},...,\tau_{(j+1)n_\epsilon-1}\}$ which consists of $n_\epsilon$ consecutive evaluation instances with starting instance being $jn_\epsilon$. Note that if we decrease $\epsilon$, the size of the window $n_\epsilon$ will increase, but its starting point will increase as well. Assumption~\ref{As:MeanAsymptoticPerformance}  states that, if we increase both the considered window of time and its starting point, then the average performance of a model $S\in\cM$ (in a partition $\vp$ and subset $\alpha$) will converge in the mean to some finite value. 

Given that every partition $\vp$ and model profile $\mathbf{S}$ will be picked infinitely often,\footnote{This observation follows directly from Borel-Cantelli lemma (cf.,~\cite[Lemma~3.14]{Breiman92}) and the fact that $\lambda>0$.} we should expect that the performance of a model on a subset $\alpha\in\Ap$ should approach a fixed value on average when the training operator is convergent. Variations in the performance of a model over a given subset $\alpha\in\Ap$ may always exist due to measurement noise or varying operating conditions, however the above assumption simply states that these variations should die away in the mean when the model has been trained sufficiently enough.
Assumption~\ref{As:MeanAsymptoticPerformance} constitutes a weaker condition compared to the conditions considered in an earlier work \cite{ChasparisNatschlaeger14_MSC}, since it accounts for noise perturbations in the performance of a model. 

\begin{assumption}[Distinct performances]	\label{As:DistinctPerformances}
For any partition-pattern profile $\vp\in\cP$, subset $\alpha\in\Ap$, and any two models $S,S'\in\cM$, we have $r_{\vpq,\infty}(S) \neq r_{\vpq,\infty}(S').$
\end{assumption}
Assumption~\ref{As:DistinctPerformances} simply states that any two distinct models in $\cM$ will provide distinct performances in the mean when trained sufficiently often, thus excluding trivial cases where the same prediction model is introduced more than once.

For any partition-pattern profile $\vp$ and model profile $\mathbf{S}$, introduce the following reward function:
$R_{\infty}(\vp,\mathbf{S}) \df \sum_{\alpha\in\Ap}r_{\vpq,\infty}(\Spq),$
which simply sums up the asymptotic average performances over all $\alpha\in\Ap$. Throughout the paper, the following assumption will also be considered.

\begin{assumption}[Best selection] \label{As:BestPartitionModels}
There exist a partition-pattern profile $\vp^*\in\cP$ and a model profile $\mathbf{S}^*$ such that:
\begin{enumerate}
 \item[(H1)] $r_{\vpq,\infty}(\Spq^*) > r_{\vpq,\infty}(\Spq')$, for all $\Spq' \neq \Spq^*$, $\vp\in\cP$ and $\alpha\in\Ap$;
 \item[(H2)] $R_{\infty}(\vp^*,\mathbf{S}^*) > R_{\infty}(\vp',\mathbf{S}^*)$, for all $\vp'\neq \vp^*$.
\end{enumerate}
\end{assumption}

This assumption implies that (H1) for a given partition-pattern profile $\vp$ and subset $\alpha\in\Ap$, there exists a model $\Spq^*\in\cM$ which performs better (in the mean) compared to any other model in $\cM$, and (H2) there exists a profile $\vp^*\in\cP$ which performs better than any other profile $\vp$ under $\mathbf{S}^*$. This assumption is always satisfied due to Assumption~\ref{As:DistinctPerformances}, since, there is always going to be a partition and a model profile which outperforms all the rest in the mean. Note further, that it is not required that the same model in $\cM$ is the best model for all pairs of $\vp\in\cP$ and $\alpha\in\Ap$, instead there exists one such model for each pair $\vp\in\cP$, $\alpha\in\Ap$.

\subsection{Convergence}

The asymptotic behavior of the supervisory process can be described by the probability distributions of $(\vv_k,\vw_k)$. In fact, convergence can be characterized with respect to the set of \emph{pure strategy} pairs $(\vv^*,\mathbf{w}^*)$ such that $\vv_{\vpq}^*$, $\mathbf{p}\in\cP$, $\alpha\in\Ap$ and $\vw^*$ are \emph{unit vectors}, (i.e., $\vw^*$ assigns probability one to a single partition-pattern profile in $\cP$ and each $\vv^*_{\vpq}$ assigns probability one to a single model in $\cM$).
\begin{theorem}[Convergence] \label{Th:GlobalConvergence}
Consider a step size $\epsilon>0$. Let Assumptions~\ref{As:MeanAsymptoticPerformance}, \ref{As:DistinctPerformances}, \ref{As:BestPartitionModels} hold and let $(\vv^*,\vw^*)$ be the pure strategy pair corresponding to the best selection $\vp^*$ and $\vS^*$ of Assumption~\ref{As:BestPartitionModels}. For sufficiently small $\lambda>0$, there exists $\delta=\delta(\lambda)>0$, with $\delta(\lambda)\to{0}$ as $\lambda\to{0}$, such that the fraction of time that the recursion $(\vv_k,\vw_k)$ spends in $\mathcal{N}_{\delta}(\vv^*,\vw^*)$ goes to one (in probability) as $\epsilon\to{0}$ and $k\to\infty$.
\end{theorem}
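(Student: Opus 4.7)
The plan is to treat the joint iteration $(\vv_k,\vw_k)$ as a constant-step-size stochastic approximation driven by noisy payoff measurements and invoke the ODE method (cf.\ Kushner--Yin, or \cite[Ch.~7]{HofbauerSigmund98}). I would organize the argument in four stages.

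\textbf{Stage 1 (interpolation and mean drift).} Form the piecewise-constant interpolation $\vv^\epsilon(t),\vw^\epsilon(t)$ on the time axis $t=k\epsilon$ and decompose each one-step update as
\[
\Delta\vw_k=\epsilon\,\bar h_w(\vv_k,\vw_k)+\epsilon\,\xi_k^{(w)}+\epsilon\,\eta_k^{(w)},
\]
and analogously for each $\Delta\vv_{\vpq,k}$, where $\bar h$ denotes the conditional-mean drift with $r_k$ replaced by $r_\infty$, $\xi_k$ is a zero-mean martingale difference arising from the randomized selections in Step~4 of Algorithms~1--2, and $\eta_k=\mathbb{E}_{\mathfrak{F}_k}[r_k-r_\infty]\,(\cdot)$ is a bias whose mean, by Assumption~\ref{As:MeanAsymptoticPerformance} averaged over windows of length $n_\epsilon$, vanishes as $\epsilon\to 0$. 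A Doob estimate on $\xi_k$ and the window-average control on $\eta_k$ imply that every weak limit of $(\vv^\epsilon,\vw^\epsilon)$ solves the mean ODE
\begin{align*}
\dot\vv_{\vpq} &= \bigl[(1-\lambda)\vw[\vp]+\tfrac{\lambda}{|\cP|}\bigr]\sum_{S\in\cM}\bigl[(1-\lambda)\vv_{\vpq}[S]+\tfrac{\lambda}{|\cM|}\bigr]\,r_{\vpq,\infty}(S)\,(\ve_S-\vv_{\vpq}),\\
\dot\vw &= \sum_{\vp\in\cP}\bigl[(1-\lambda)\vw[\vp]+\tfrac{\lambda}{|\cP|}\bigr]\,R_\infty(\vp,\vS(\vv))\,(\ve_\vp-\vw),
\end{align*}
where $\vS(\vv)$ denotes the expected model profile under $\vv$. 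The prefactor multiplying the $\vv$-equation encodes that $\vv_{\vpq}$ is updated only when pattern $\vp$ is selected, and the analogous factor for $\alpha$ (arising from $\mathbb{I}_{\mathbf u(\tau_j)\in\alpha}$) is positive by the $\lambda$-exploration.

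\textbf{Stage 2 (rest points and stability).} For $\lambda=0$ the $\vv_{\vpq}$-subsystem is a weighted replicator flow on $\mathbf{\Delta}(|\cM|)$; by Assumption~\ref{As:DistinctPerformances} and (H1), the vertex $\ve_{\Spq^*}$ is the unique Lyapunov-stable pure equilibrium and attracts every interior trajectory (the argument of \cite{ChasparisShammaRantzer14_IJGT}). Freezing $\vv$ at this attractor reduces the $\vw$-equation to a replicator flow with payoffs $R_\infty(\vp,\vS^*)$, whose unique Lyapunov-stable vertex is $\ve_{\vp^*}$ by (H2). A cascade / singular-perturbation argument (the $\vv$-flow is strictly contractive toward the best vertex in the interior of each face, so it is effectively slaved to the slower $\vw$-coordinate) lifts this to joint global attractivity of $(\vv^*,\vw^*)$ from the interior of the product simplex. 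A continuity / implicit-function argument then produces, for small $\lambda>0$, a unique perturbed equilibrium $(\vv^\lambda,\vw^\lambda)\in\mathcal{N}_{\delta(\lambda)}(\vv^*,\vw^*)$ with $\delta(\lambda)\to 0$ as $\lambda\to 0$, and this equilibrium is globally asymptotically stable from the interior.

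\textbf{Stage 3 (stochastic tracking).} The iterates are confined to the compact product of simplices, so tightness of the interpolations is automatic. Invoking a constant-step-size tracking theorem (Kushner--Yin-style, or Bena\"{\i}m's stochastic-approximation--ODE correspondence) yields that for every open neighborhood $U$ of $(\vv^\lambda,\vw^\lambda)$, the fraction of iterates $k\in\{1,\ldots,T/\epsilon\}$ falling outside $U$ tends to $0$ in probability as $\epsilon\to 0$ and $T\to\infty$. Choosing $U=\mathcal{N}_{\delta(\lambda)}(\vv^*,\vw^*)$ gives the theorem. The $\lambda$-exploration is essential here, since it guarantees that every pattern $\vp$ is selected and every subset $\alpha\in\Ap$ is visited with positive frequency, so the averaging premise of Assumption~\ref{As:MeanAsymptoticPerformance} applies uniformly over $\vp,\alpha,S$.

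\textbf{Main obstacle.} The tight spot is Stage~2: the $\vw$- and $\vv$-subsystems evolve on the same $\epsilon$-scale and are coupled through $\vS(\vv)$, so naive decoupling is not available. I expect the cleanest route to be a joint Lyapunov function, built as a positive combination of the standard entropic Lyapunov functions of each replicator subsystem, weighted so as to dominate the $\vv$--$\vw$ cross-terms; alternatively, a LaSalle invariance argument on the product simplex combined with Assumption~\ref{As:BestPartitionModels} to rule out every other $\omega$-limit.
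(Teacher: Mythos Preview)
Your architecture (constant-step stochastic approximation $\to$ mean ODE via Kushner--Yin $\to$ stability of the mean ODE $\to$ in-probability tracking) coincides with the paper's. The substantive difference is how Stage~2 is closed.

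The paper does not use a cascade argument or an entropic combination. It writes down a single joint Lyapunov function, the payoff gap
\[
V(\vv,\vw)\;=\;{\vw^*}\tr\overline{\mathbf{R}}_{\infty}(\vv^*)\;-\;\vw\tr\overline{\mathbf{R}}_{\infty}(\vv),
\]
and computes $\dot V$ directly along the mean ODE. The derivative decomposes into two terms, $-\sum_{\vp}\vw[\vp]\sum_{\alpha}\mathbf{r}_{\vpq,\infty}\tr\mathbf{V}(\vv_{\vpq})\mathbf{r}_{\vpq,\infty}$ and $-\overline{\mathbf{R}}_{\infty}(\vv)\tr\mathbf{W}(\vw)\overline{\mathbf{R}}_{\infty}(\vv)$, each of which expands as a nonnegative quadratic form in payoff differences (weighted by products $\vv_{\vpq}[i]\vv_{\vpq}[j]$, resp.\ $\vw[i]\vw[j]$), strictly positive in the interior by Assumption~\ref{As:DistinctPerformances}. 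This yields attractivity of the set of pure-strategy vertices. Non-optimal vertices are then excluded by a direct sign check on the $i^*$-coordinate of the $\vv$-drift near them (the $\lambda$-exploration makes that coordinate strictly positive, so the field points outward), and local asymptotic stability of the perturbed equilibrium near $(\vv^*,\vw^*)$ is imported from \cite{ChasparisShamma11_DGA}.

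Your cascade intuition is in fact stronger than you give it credit for: in your own mean ODE the only $\vw$-dependence in the $\vv_{\vpq}$-equation is the strictly positive scalar prefactor $(1-\lambda)\vw[\vp]+\lambda/|\cP|$, so the $\vv$-\emph{trajectories} (not their speeds) are decoupled from $\vw$ and each $\vv_{\vpq}$ converges to $\ve_{\Spq^*}$ on its own. The $\vw$-flow is then asymptotically autonomous with limiting payoffs $R_\infty(\cdot,\vS^*)$, and (H2) finishes. So either your triangular argument or the paper's payoff-gap Lyapunov function works; the paper's choice is shorter and avoids the asymptotic-autonomy machinery. A minor discrepancy: the paper's ODE for $\vv_{\vpq}$ drops your selection prefactor and is written as autonomous; since the factor is strictly positive this is harmless for stability, but be explicit about the rescaling if you keep your form.
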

\begin{proof}
The proof requires a series of propositions and it will be shown in detail in Section~\ref{sec:TechnicalDerivation}.
\end{proof}

Theorem~\ref{Th:GlobalConvergence} establishes convergence (in probability) of $(\vv_k,\vw_k)$ into $(\vv^*,\vw^*)$ such that $\vv_{\vpq}^*$ and $\vw^*$ are unit vectors assigning probability 1 to the index of the best model $S_{\vpq}^*$ and the best partition $\vp^*$, respectively. \emph{Convergence in probability implies that the fraction of time that the recursion spends in a $\delta$-neighborhood of $(\vv^*,\vw^*)$ goes to one as $\epsilon\to{0}$ and $k\to\infty$}. In other words, as we reduce $\epsilon$, the fraction of time selecting the best partition/model profile becomes arbitrarily close to 1. If we also take $\lambda$ small, we may make the size of the $\delta$-neighborhood arbitrarily small thus enforcing selection of the best partition/model profile even further. 

Theorem~\ref{Th:GlobalConvergence} has been derived under Assumptions~\ref{As:MeanAsymptoticPerformance}--\ref{As:BestPartitionModels}. Note, however, that the mean performance of a model may not be constant with time, e.g., due to unseen operating conditions. Furthermore, the partition/model profile that provides the best performance may not necessary be fixed with time as well. The above convergence property of Theorem~\ref{Th:GlobalConvergence} applies as long as Assumptions~\ref{As:MeanAsymptoticPerformance}--\ref{As:BestPartitionModels} are valid. In case that, the mean performances alter significantly or the best partition/model profile changes, then the conclusions of Theorem~\ref{Th:GlobalConvergence} continue to hold, but for the new conditions. Thus, we may say that the algorithm exhibits adaptivity to possible changes in the performance of the models. This is attributed to the selection of constant step-size $\epsilon>0$.


\subsection{Evaluation}	\label{sec:Evaluation}

To evaluate the performance of the supervisory process, we considered the case-study of thermal dynamics in buildings presented in Section~\ref{sec:Motivation}. We then ran the supervisory prediction scheme of Table~\ref{Tb:SOPA_JointSelection}, for the prediction of the indoor temperature within one thermal zone. As described in \cite{ChasparisNatschlaeger16}, the control variables are a) the water flow rate $\dot{V}_{w}$ of the radiant-heating system, and b) the air flow rate $\dot{V}_{a}$ of the ventilation system. Using standard modeling techniques, reference \cite{ChasparisNatschlaeger16} has demonstrated that the nonlinear thermal dynamics can be written in the form of (\ref{eq:BilinearDynamics}).

We considered two partition patterns. The first corresponds to the non-partitioned set of flows, while the second divides the set of flows into two equal subsets (similarly to Figure~\ref{fig:PartitioningExample}). We designed a set of prediction models $\cM$, based on an output-error (discrete-time) formulation of the dynamics (cf.,~\cite[Section~4.2]{Ljung99}) which leads to predictors of the form $\varphi_{S}(\tau_j,\theta_{S})\tr\theta_S,$ $S\in\cM$. The unknown parameters $\theta_S$ are to be identified for each $S$. Three alternative regression vectors were considered, $\cM=\{1,2,3\}$, including a standard 2nd-order linear regression basis in model 1, a standard 2nd-order linear regression basis in model 2, and a 2nd-order nonlinear regression basis in model 3. 

\begin{figure}[t!]
\centering
\iffigures
\includegraphics[scale=0.8]{./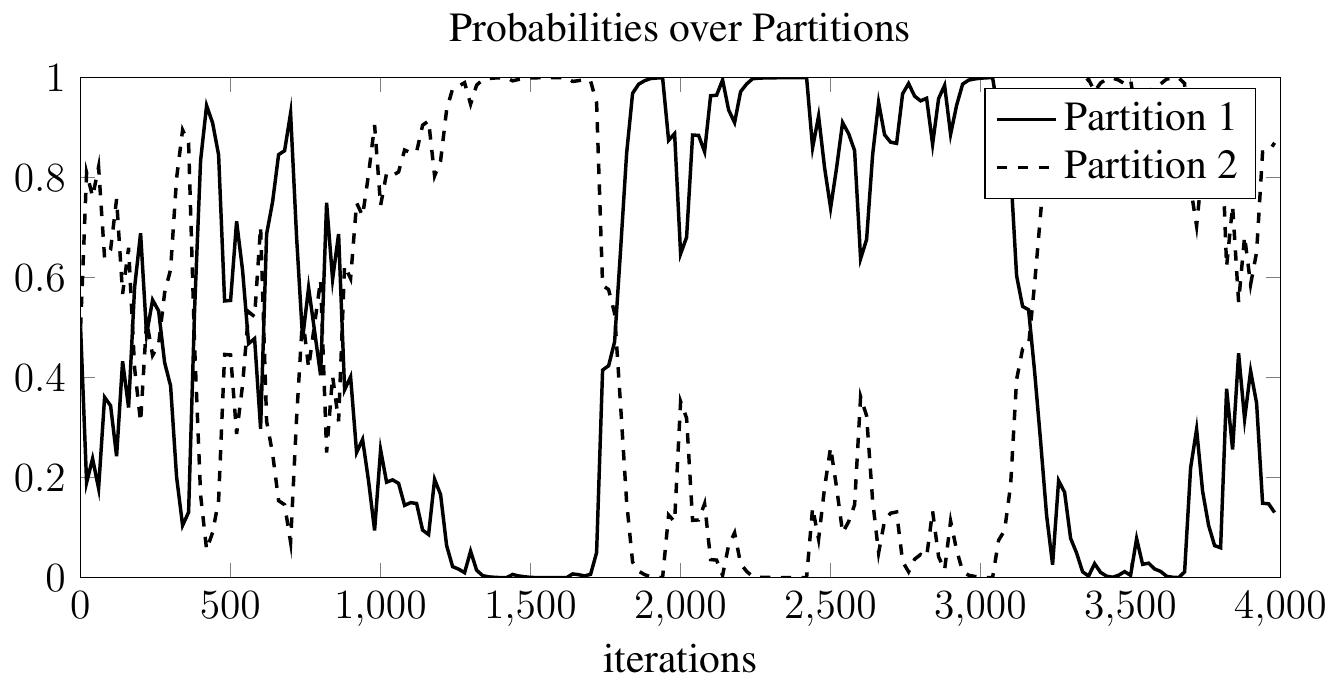}
\fi
\caption{Probabilities over partition-pattern profiles.}
\label{fig:ProbabilitiesOverPartitions}
\end{figure}

\begin{figure}[t!]
\centering
\iffigures
\includegraphics[scale=0.8]{./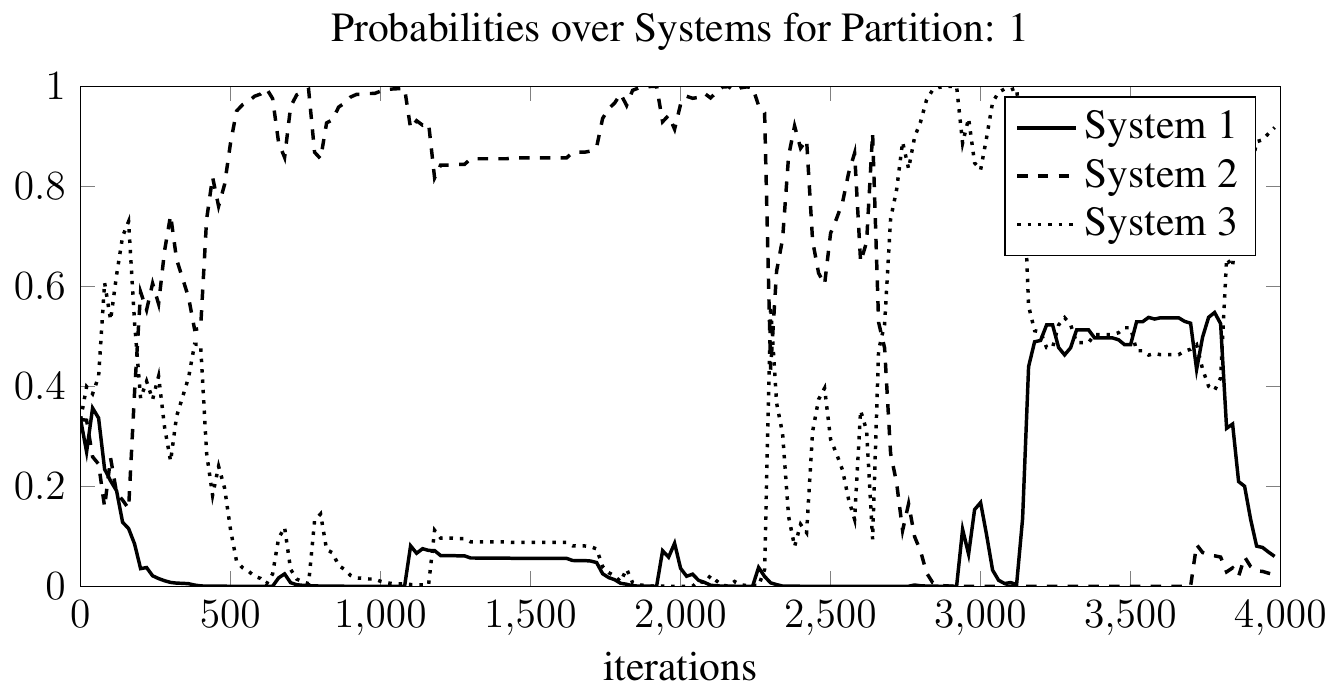}
\fi
\caption{Probabilities over models for partition scheme 1.}
\label{fig:ProbabilitiesOverModelsForPartition1}
\end{figure}

\begin{figure}[t!]
\centering
\iffigures
\includegraphics[scale=0.8]{./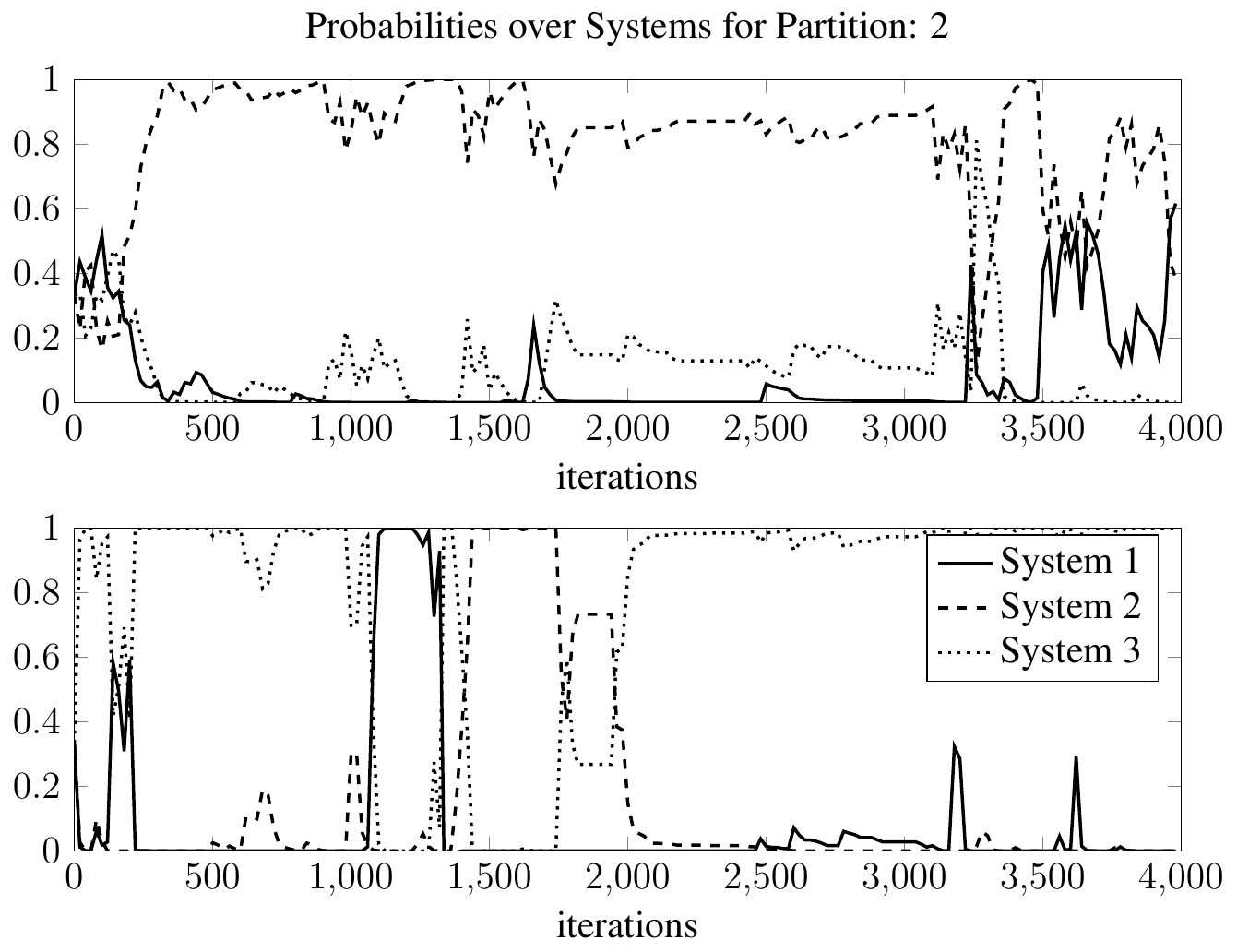}
\fi
\caption{Probabilities over models for partition scheme 2 and partition sets 1 \& 2, respectively.}
\label{fig:ProbabilitiesOverModelsForPartition2}
\end{figure}

\begin{figure}[t!]
\centering
\iffigures
\includegraphics[scale=0.8]{./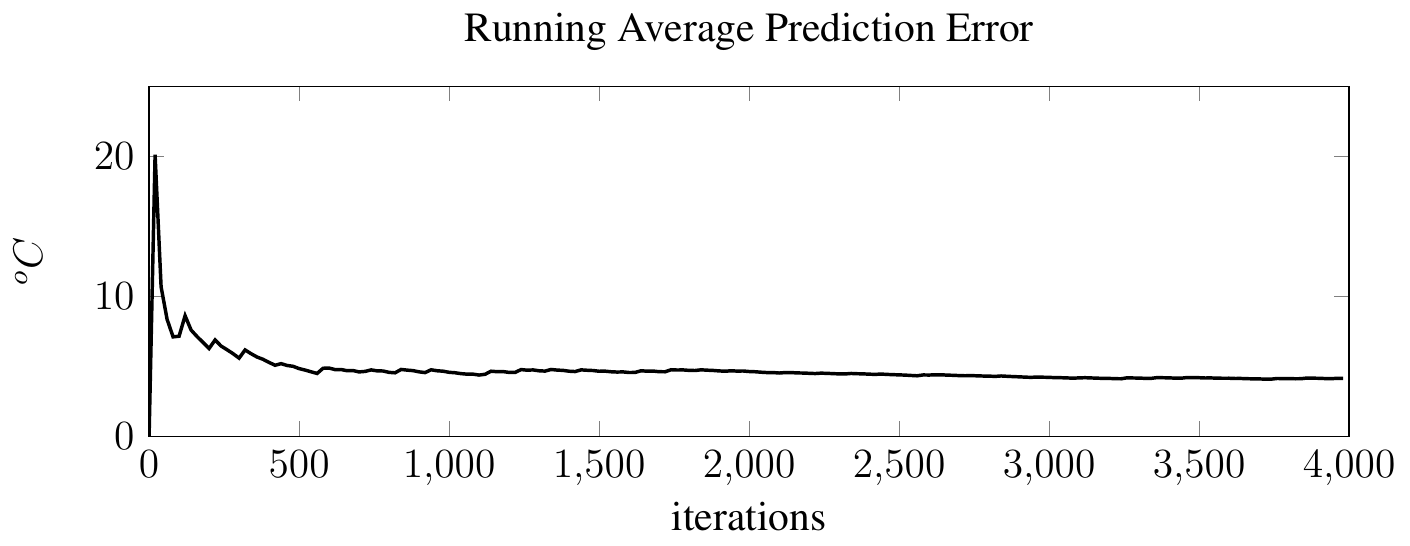}
\fi
\caption{Running average prediction error.}
\label{fig:RunningAveragePredictionError}
\end{figure}

\begin{figure}[t!]
\centering
\iffigures
\includegraphics[scale=0.8]{./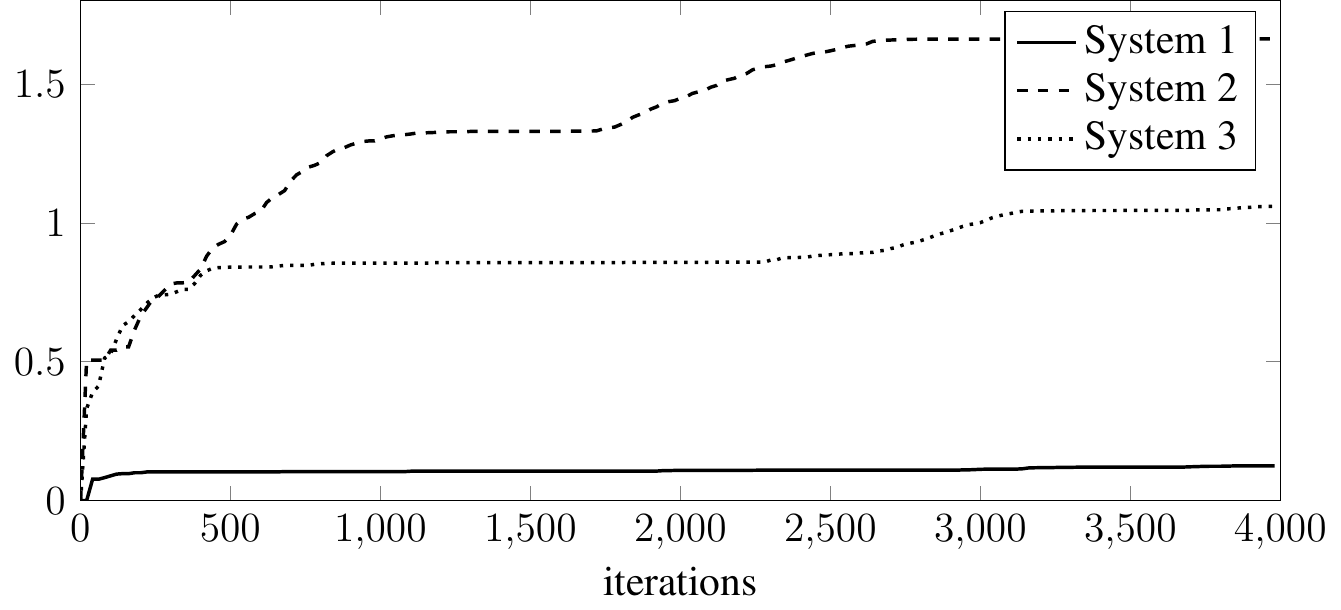}
\fi
\caption{Running average performance for partition pattern 1.}
\label{fig:RunningAveragePerformance_Partition1}
\end{figure}

\begin{figure}[t!]
\centering
\iffigures
\includegraphics[scale=0.8]{./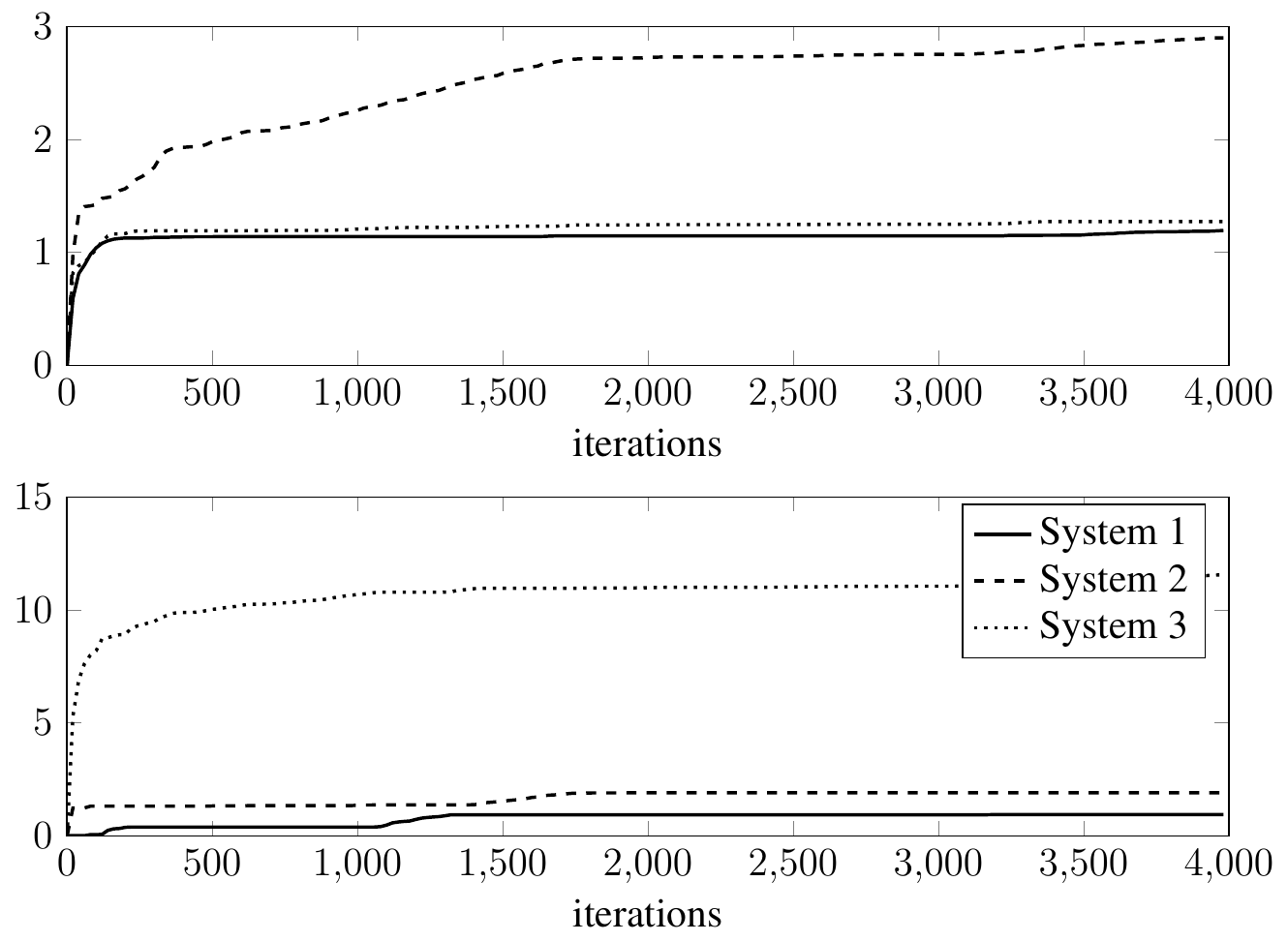}
\fi
\caption{Running average performance for partition pattern 2.}
\label{fig:RunningAveragePerformance_Partition2}
\end{figure}

We set the sampling period equal to $T_s=\nicefrac{1}{12}{\rm h}$ and the period of the supervisory process was set equal to $T_e=400T_s$. The step-size of the dynamics was set equal to $\epsilon=0.05$ and the perturbation probability was set equal to $\lambda=0.03$. Figure~\ref{fig:ProbabilitiesOverPartitions} shows how the probabilities over the considered two partition-pattern profiles evolve with time, while Figures~\ref{fig:ProbabilitiesOverModelsForPartition1}--\ref{fig:ProbabilitiesOverModelsForPartition2} show how the probabilities over the available models evolve in case of the two partition schemes, respectively. Figure~\ref{fig:RunningAveragePredictionError} shows the running average prediction error, while Figures~\ref{fig:RunningAveragePerformance_Partition1}--\ref{fig:RunningAveragePerformance_Partition2} show the running average performance of the models.

The supervisory process follows the (\emph{in probability}) type of convergence of Theorem~\ref{Th:GlobalConvergence}. See, for example, Figure~\ref{fig:ProbabilitiesOverPartitions}, where the process switches between the two partition schemes (resulting in variations in their probability distribution). The partition which provides the highest performance will eventually be picked more often (as it is the case for partition profile 1). 

A similar behavior is also observed in Figure~\ref{fig:ProbabilitiesOverModelsForPartition1} (for model selection), where the process selects Model~2 for a quite large fraction of the implementation time (which agrees with the observation that Model~2 provides the highest running average performance, Figure~\ref{fig:RunningAveragePerformance_Partition1}). As Figure~\ref{fig:ProbabilitiesOverModelsForPartition1} demonstrates, all models are selected frequently, thus if a model starts performing better at later times, then the supervisory scheme will adapt to the new best option. 

Finally, it is worth mentioning that Figures~\ref{fig:RunningAveragePerformance_Partition1}--\ref{fig:RunningAveragePerformance_Partition2} constitute an informal verification of Assumptions~\ref{As:MeanAsymptoticPerformance}, \ref{As:DistinctPerformances} and \ref{As:BestPartitionModels}. In particular, there is a model which provides better performance than any other selection (Assumption~\ref{As:BestPartitionModels}). Furthermore, the running average performance of all models is convergent when the models have been selected sufficiently often (Assumption~\ref{As:MeanAsymptoticPerformance}). Finally, each model provides a distinct performance in the mean (Assumption~\ref{As:DistinctPerformances}).

\subsection{Discussion}

The \emph{in-probability} type of convergence is a consequence of making choices based on strategy vectors and using a constant step-size $\epsilon>0$. It may result in suboptimal selections over (arbitrarily) small portions of the implementation time, however it provides a highly attractive feature for such supervisory schemes, that is its \emph{linear} computational complexity. 

Note that we may consider dynamics that provide stronger convergence guarantees. However, this most likely will result in losing the linear complexity feature. To see this, consider instead an algorithm that directly addresses optimization (\ref{eq:InitialOptimizationProblem}): at each iteration $k$, \emph{all} partition-patterns in $\cP$ and \emph{all} available models in $\cM$ are being evaluated and compared with each other. In this case, at each iteration $k$, we may select the partition/model that have provided the best performance so far. Such dynamics (which belong to minimizing-regret type of dynamics) will provide stronger convergence guarantees. However, they require a number of computations per iteration that is proportional to $(c\magn{\cM}+1)\magn{\cP}$ (without taking into account the required training of each available model in each one of the subsets of the available partitions). Depending on the size of the sets $\cM$ and $\cP$, such computational complexity may be considerable larger compared to the one of the proposed dynamics, $c\magn{\cM}+\magn{\cP}$.

\section{Technical Derivation}		\label{sec:TechnicalDerivation}

The following analysis is concerned with the asymptotic behavior of the strategies $\vv_{\vpq,k}$ and $\vw_k$. Let us define: 
\begin{equation*}
\mathbf{r}_{\vpq,k} \df \sum_{i\in\cM}\ve_i r_{\vpq,k}(i) \in \mathbb{R}_{+}^{\magn{\cM}},
\end{equation*} 
\begin{equation*}
\overline{R}_{k}(\vp,\vv) \df \sum_{\alpha\in\Ap} \sum_{i\in\cM}r_{\vpq,k}(i)\vv_{\vpq}[i] \in\mathbb{R}_{+},
\end{equation*}
\begin{equation*}
\overline{\mathbf{R}}_{k}(\vv) \df \sum_{\vp\in\cP} \ve_{\vp} \overline{R}_{k}(\vp,\vv) \in \mathbb{R}_{+}^{\magn{\cP}}.
\end{equation*}
Note that $\mathbf{r}_{\vpq,k}$ denotes the vector of performances realized by each model in the partition-pattern profile $\vp$ and subset $\alpha$, i.e., $\mathbf{r}_{\vpq,k}[i] \equiv r_{\vpq,k}(i)$. Moreover, $\overline{R}_k(\vp,\vv)$ denotes the expected performance of partition-pattern profile $\vp$ given the strategies $\vv_{\vpq}$ applied to each subset $\alpha$; and, $\overline{\mathbf{R}}_k(\vv)$ denotes the vector of the expected performances exhibited by each partition-pattern profile. 

The corresponding limits as $k\to\infty$ will be denoted by $\mathbf{r}_{\vpq,\infty}$, $\overline{R}_{\infty}(\vp,\vv)$ and $\overline{\mathbf{R}}_{\infty}(\vv)$, respectively. 

\subsection{ODE approximation}

The convergence properties of the supervisory scheme can be analyzed via the ODE method for stochastic approximations \cite{KushnerYin03}. It is based on the selection of an \emph{Ordinary Differential Equation}, the limit sets of which can be associated with the observed asymptotic behavior of recursions (\ref{eq:ModelsStrategiesUpdatesB}), (\ref{eq:PartitionsStrategyUpdates}). 


The convergence properties of (\ref{eq:ModelsStrategiesUpdatesB}), (\ref{eq:PartitionsStrategyUpdates}) will be associated with the limit sets of the following system of ODE's:
\begin{subequations}	\label{eq:ODE_Approximation}
\begin{eqnarray}
\dot{\vv}_{\vpq} & = & \overline{\mathbf{g}}_{\vpq,\infty}^{\lambda}(\vv_{\vpq}), \label{eq:ODE_Approximation_v} \\
\dot{\vw} & = & \overline{\mathbf{f}}_{\infty}^{\lambda}(\vv,\vw) \label{eq:ODE_Approximation_w}
\end{eqnarray}
\end{subequations}
for each $\vp\in\cP$ and $\alpha\in\Ap$, where 
\begin{eqnarray*}
\lefteqn{\overline{\mathbf{g}}_{\vpq,\infty}^{\lambda}(\vv_{\vpq}) \df } \cr && \mathbb{E}_{\mathfrak{F}_k}\left[\left. r_{\vpq,\infty}(S_{\vpq,k})\left[\ve_{S_{\vpq,k}} - \vv_{\vpq,k}\right]\right|\vv_{\vpq,k}=\vv_{\vpq}\right],
\end{eqnarray*}
\begin{eqnarray*}
\overline{\mathbf{f}}_{\infty}^{\lambda}(\vv,\vw) \df
\mathbb{E}_{\mathfrak{F}_k}\left[\left. R_{\infty}(\vp_k,\mathbf{S}_k) \left[\ve_{\vp_k} - \vw_k\right]\right|\vv_k=\vv,\vw_k=\vw\right].
\end{eqnarray*}
In the unperturbed case ($\lambda=0$), according to \cite{ChasparisShammaRantzer14_IJGT}, the above ODE takes on a special form:
$\overline{\mathbf{g}}_{\vpq,\infty}^{0}(\vv_{\vpq})\equiv \mathbf{V}(\vv_{\vpq})\mathbf{r}_{\vpq,\infty}$ where
$\mathbf{V}:\mathbf{\Delta}(\magn{\cM})\mapsto\mathbb{R}^{\magn{\cM}\times\magn{\cM}}$ is such that 
\begin{eqnarray*}
[\mathbf{V}(\vv_{\vpq})]_{ji} \df \begin{cases}
\vv_{\vpq}[j]\cdot (1-\vv_{\vpq}[j]), & j=i \\
 -\vv_{\vpq}[j]\cdot \vv_{\vpq}[i], & j\neq{i}
\end{cases}, \quad \forall i,j\in\cM,
\end{eqnarray*}
and 
$\overline{\mathbf{f}}_{\infty}^{0}(\vv,\vw)\equiv \mathbf{W}(\vw)\overline{\mathbf{R}}_{\infty}(\vv),$ where $\mathbf{W}:\mathbf{\Delta}(\magn{\cP})\mapsto\mathbb{R}^{\magn{\cP}\times\magn{\cP}}$ is such that 
\begin{eqnarray*}
[\mathbf{W}(\vw)]_{ji} \df \begin{cases}
\vw[j]\cdot (1-\vw[j]), & j=i \\
-\vw[j]\cdot \vw[i], & j\neq{i}
\end{cases} \quad \forall j,i\in\cP.
\end{eqnarray*}

The connection between the ODE~(\ref{eq:ODE_Approximation}) for some $\lambda>0$ and the discrete-time recursions (\ref{eq:ModelsStrategiesUpdatesB}), (\ref{eq:PartitionsStrategyUpdates}) is established by the following proposition.
\begin{proposition}[Asymptotic properties]	\label{Pr:AsymptoticProperties}
For some $\lambda>0$, let $\mathcal{L}$ denote the limit points\footnote{The set of \emph{limit points} $\mathcal{L}$ of an ODE $\dot{\mathbf{x}}=\mathbf{g}(\mathbf{x})$ with domain $A$ is defined as $\mathcal{L}\df \lim_{t\to\infty}\bigcup_{x\in{A}}\{\mathbf{x}(s),s\geq{t}:\mathbf{x}(0)=x\}$, i.e., it is the set of all points in $A$ to which the solution of the ODE converges.} of the system of ODE's~(\ref{eq:ODE_Approximation_v})--(\ref{eq:ODE_Approximation_w}). For any $\delta>0$, the fraction of time that the linear-time interpolation\footnote{The linear-time interpolation of a recursion $\mathbf{x}_k$, $k=0,1,...$, is defined as $\overline{\mathbf{x}}(\tau)=\mathbf{x}_k$ for all $\epsilon k\leq{\tau}<\epsilon(k+1)$.} of the recursions $\{\vv_{\vpq,k}\}_{\vpq}$ and $\vw_k$ spends in $\mathcal{N}_{\delta}(\mathcal{L})$ goes to one (in probability) as $\epsilon\to{0}$ and $k\to\infty$.
\end{proposition}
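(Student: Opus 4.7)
The plan is to cast the coupled recursions \eqref{eq:ModelsStrategiesUpdatesB} and \eqref{eq:PartitionsStrategyUpdates} in standard stochastic-approximation form with constant step-size $\epsilon$, and then invoke the weak-convergence ODE method of Kushner--Yin to identify the asymptotic occupation measure of the linear-time interpolation with measures supported on $\mathcal{L}$. Concretely, I would rewrite each joint update as
\begin{equation*}
\vv_{\vpq,k+1}-\vv_{\vpq,k}=\epsilon\bigl[\overline{\mathbf{g}}_{\vpq,\infty}^{\lambda}(\vv_{\vpq,k})+\beta_{\vpq,k}+M_{\vpq,k}\bigr],
\end{equation*}
\begin{equation*}
\vw_{k+1}-\vw_{k}=\epsilon\bigl[\overline{\mathbf{f}}_{\infty}^{\lambda}(\vv_{k},\vw_{k})+\eta_{k}+N_{k}\bigr],
\end{equation*}
where $M_{\vpq,k}$ and $N_{k}$ are martingale differences associated with the randomized selections of $S_{\vpq,k}$ and $\vp_{k}$ (after subtracting their $\mathfrak{F}_{k}$-conditional means), while $\beta_{\vpq,k}$ and $\eta_{k}$ are bias terms that capture the gap between the finite-time performances $r_{\vpq,k}, R_{k}$ and their asymptotic counterparts $r_{\vpq,\infty}, R_{\infty}$.

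Next I would verify the ingredients required by the ODE method. (i) \emph{Boundedness}: both iterates live in the probability simplices $\mathbf{\Delta}(|\cM|)$ and $\mathbf{\Delta}(|\cP|)$, which are compact. (ii) \emph{Moment bound on noise}: input-output stability of \eqref{eq:BilinearDynamics} and compactness of $\mathcal{U}$ imply a uniform upper bound on $r_{\vpq,k}$, so $\mathbb{E}|M_{\vpq,k}|^{2}$ and $\mathbb{E}|N_{k}|^{2}$ are uniformly bounded. (iii) \emph{Lipschitz mean vector field}: $\overline{\mathbf{g}}_{\vpq,\infty}^{\lambda}$ and $\overline{\mathbf{f}}_{\infty}^{\lambda}$ are polynomial in $(\vv,\vw)$ via the $\lambda$-perturbed selection kernel, hence Lipschitz on the simplices. (iv) \emph{Averaging of the bias}: using $n_{\epsilon}$ as in Assumption~\ref{As:MeanAsymptoticPerformance}, and noting that $\|\vv_{k+n_\epsilon}-\vv_{k}\|,\|\vw_{k+n_\epsilon}-\vw_{k}\|=O(\epsilon n_{\epsilon})\to 0$, the slowly-varying iterates are essentially constant on the window $[jn_{\epsilon},(j+1)n_{\epsilon}-1]$, and Assumption~\ref{As:MeanAsymptoticPerformance} then yields
\begin{equation*}
\lim_{j\to\infty}\lim_{\epsilon\to 0}\frac{1}{n_{\epsilon}}\sum_{k=jn_{\epsilon}}^{(j+1)n_{\epsilon}-1}\mathbb{E}_{\mathfrak{F}_{jn_{\epsilon}}}\bigl[\beta_{\vpq,k}\bigr]=0,
\end{equation*}
and similarly for $\eta_{k}$.

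With these in place, the standard weak-convergence theorem for constant step-size stochastic approximation (Kushner--Yin, Theorem~8.2.1/8.2.2) applies: the piecewise-constant interpolation $(\overline{\vv}^{\epsilon}(\cdot+t_{\epsilon}),\overline{\vw}^{\epsilon}(\cdot+t_{\epsilon}))$, for any sequence of shifts $t_{\epsilon}\to\infty$, is tight and every weak limit is (almost surely) a solution of the ODE system \eqref{eq:ODE_Approximation_v}--\eqref{eq:ODE_Approximation_w}. Since such solutions are absorbed into $\mathcal{N}_{\delta}(\mathcal{L})$ by the very definition of $\mathcal{L}$, the fraction of time spent by the interpolation in $\mathcal{N}_{\delta}(\mathcal{L})$ converges to one in probability as $\epsilon\to 0$ and $k\to\infty$, which is exactly the stated conclusion.

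I expect the main obstacle to be step (iv), the averaging of the bias. Assumption~\ref{As:MeanAsymptoticPerformance} is stated only as a double-limit over nested windows and does \emph{not} give pointwise convergence of $r_{\vpq,k}$; hence one must carefully couple it with the $O(\epsilon n_{\epsilon})$ smallness of the increments of $(\vv,\vw)$ over the window so that the ``frozen state'' approximation is valid. This, together with the extra subtlety that the recursion for $\vv_{\vpq}$ is active only when $(\vp,\alpha)$ is visited (handled by the $\lambda$-exploration, which guarantees visitation at rate bounded away from zero), constitutes the technical core; the remaining arguments are mechanical applications of the ODE-method machinery.
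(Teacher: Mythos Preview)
Your proposal is correct and follows essentially the same route as the paper: both invoke the Kushner--Yin ODE method for constant step-size stochastic approximation, checking boundedness of the iterates, uniform boundedness of the performance functions, continuity of the mean vector fields, and the averaging condition supplied by Assumption~\ref{As:MeanAsymptoticPerformance}. The only notable difference is bibliographic: the paper appeals directly to Theorem~8.4.1 of Kushner--Yin, whose hypotheses are tailored precisely to the double-limit averaging condition in Assumption~\ref{As:MeanAsymptoticPerformance}, whereas you propose to use the more elementary Theorems~8.2.1/8.2.2 and carry out the bias-averaging argument (your step~(iv)) by hand---this is the same verification, just unpacked.
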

\begin{proof}
The proof follows directly from Theorem~8.4.1 in \cite{KushnerYin03}, due to a) Assumption~\ref{As:MeanAsymptoticPerformance}, b) the performance functions are uniformly bounded, c) the functions $\overline{\mathbf{g}}_{\vpq,\infty}^{\lambda}(\cdot)$ and $\overline{\mathbf{f}}_{\infty}^{\lambda}(\cdot,\cdot)$ are continuous in their domain.
\end{proof}

In other words, Proposition~\ref{Pr:AsymptoticProperties} states that if we consider a sufficiently large number of iterations $k$, the fraction of time that the stochastic recursion will spend in $\mathcal{N}_{\delta}(\mathcal{L})$ goes to one (in probability) as $\epsilon\to{0}$ and $k\to\infty$. Potential limit points of the ODE's~(\ref{eq:ODE_Approximation_v})--(\ref{eq:ODE_Approximation_w}) are its stationary points, i.e., points $(\vv^*,\vw^*)$ at which $\overline{\mathbf{g}}_{\vpq,\infty}^{\lambda}(\vv_{\vpq}^*) = 0$, and $\overline{\mathbf{f}}_{\infty}^{\lambda}(\vv^*,\vw^*)=0$. 

\begin{proposition}[Unique stationary point]
Let $(\vv^*,\vw^*)$ be a pure strategy pair corresponding to the best selection $\vp^*$ and $\vS^*$. For sufficiently small $\lambda>0$, there exists $\delta=\delta(\lambda)$ with $\delta(\lambda)\to{0}$ as $\lambda\to{0}$, such that $\mathcal{N}_{\delta}(\vv^*,\vw^*)$ contains the unique stationary point of the ODE's~(\ref{eq:ODE_Approximation_v})--(\ref{eq:ODE_Approximation_w}).
\end{proposition}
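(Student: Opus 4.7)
Since $\overline{\mathbf{g}}_{\vpq,\infty}^{\lambda}$ depends only on $\vv_{\vpq}$, the model ODE's~(\ref{eq:ODE_Approximation_v}) decouple from one another and from $\vw$. The plan is therefore to locate a unique zero of each $\overline{\mathbf{g}}_{\vpq,\infty}^{\lambda}$ near $\ve_{S^{*}_{\vpq}}$ by an implicit function argument, and then to substitute the resulting $\vv^{*}(\lambda)$ into (\ref{eq:ODE_Approximation_w}) and repeat the argument for $\vw$.

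For the model block, at $\lambda = 0$ the ODE reduces to the replicator equation $\dot{\vv}_{\vpq} = \mathbf{V}(\vv_{\vpq})\mathbf{r}_{\vpq,\infty}$. A direct computation shows that its linearization at the vertex $\ve_{S^{*}_{\vpq}}$, restricted to the tangent space of $\mathbf{\Delta}(\magn{\cM})$, is diagonal with entries $r_{\vpq,\infty}(j) - r_{\vpq,\infty}(S^{*}_{\vpq})$ for $j \neq S^{*}_{\vpq}$. By hypothesis (H1) of Assumption~\ref{As:BestPartitionModels}, each such entry is strictly negative, so this Jacobian is invertible on the tangent space. Since $\overline{\mathbf{g}}_{\vpq,\infty}^{\lambda}$ is jointly smooth in $(\vv_{\vpq},\lambda)$, the implicit function theorem---applied in $\magn{\cM}-1$ free coordinates on the affine hull of the simplex, which is invariant under the flow---delivers, for all sufficiently small $\lambda>0$, a unique $\vv_{\vpq}^{*}(\lambda)$ in a neighborhood of $\ve_{S^{*}_{\vpq}}$ solving $\overline{\mathbf{g}}_{\vpq,\infty}^{\lambda}(\vv_{\vpq}) = 0$, and continuous dependence yields $\vv_{\vpq}^{*}(\lambda) \to \ve_{S^{*}_{\vpq}}$ as $\lambda \to 0$. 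A first-order expansion of the resulting stationarity condition $\vv_{\vpq}[j]\,(c-(1-\lambda)r_{\vpq,\infty}(j)) = (\lambda/\magn{\cM})\,r_{\vpq,\infty}(j)$ confirms that $\vv_{\vpq}^{*}(\lambda)$ remains in the simplex for small $\lambda$ and in fact lies within $O(\lambda)$ of $\ve_{S^{*}_{\vpq}}$.

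The partition block is handled analogously after substituting $\vv = \vv^{*}(\lambda)$. At $\lambda = 0$, $\overline{R}_{\infty}(\vp,\vv^{*}(0)) = R_{\infty}(\vp,\vS^{*})$, so (\ref{eq:ODE_Approximation_w}) becomes a replicator equation with payoffs $\{R_{\infty}(\vp,\vS^{*})\}_{\vp}$. Hypothesis (H2) singles out $\vp^{*}$ as the strict maximizer; the tangent-space Jacobian at $\ve_{\vp^{*}}$ then has eigenvalues $R_{\infty}(\vp,\vS^{*}) - R_{\infty}(\vp^{*},\vS^{*}) < 0$ for $\vp \neq \vp^{*}$, and the same implicit function argument produces a unique $\vw^{*}(\lambda)$ near $\ve_{\vp^{*}}$ with $\vw^{*}(\lambda) \to \ve_{\vp^{*}}$ as $\lambda \to 0$. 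Taking $\delta(\lambda)$ as the larger of the two implicit-function radii establishes that $\mathcal{N}_{\delta(\lambda)}(\vv^{*},\vw^{*})$ contains the unique joint stationary point of~(\ref{eq:ODE_Approximation_v})--(\ref{eq:ODE_Approximation_w}), and that $\delta(\lambda) \to 0$ as $\lambda \to 0$.

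The one delicate step is that treating $\vv_{\vpq}$ naively as an element of $\mathbb{R}^{\magn{\cM}}$ makes the Jacobian of $\overline{\mathbf{g}}_{\vpq,\infty}^{0}$ at $\ve_{S^{*}_{\vpq}}$ singular, because the flow preserves $\sum_{j}\vv_{\vpq}[j]=1$; reducing to $\magn{\cM}-1$ independent coordinates before invoking the implicit function theorem removes this spurious degeneracy, and the same reduction is needed on $\mathbf{\Delta}(\magn{\cP})$. All remaining manipulations are routine algebra.
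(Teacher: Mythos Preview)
Your implicit-function argument is correct as far as it goes, and it is considerably more explicit than the paper's own proof, which simply invokes \cite[Proposition~3.5]{ChasparisShamma11_DGA}. The decoupling observation, the reduction to $|\cM|-1$ simplex coordinates to avoid the spurious kernel, and the linearization computation are all sound, and they do deliver a stationary point in $\mathcal{N}_{\delta}(\vv^*,\vw^*)$ with the stated continuity in~$\lambda$.

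The gap is that the proposition asserts \emph{global} uniqueness---``the unique stationary point of the ODE's''---whereas the implicit function theorem only rules out other zeros in a small neighborhood of $(\vv^*,\vw^*)$. At $\lambda=0$ every pure-strategy vertex is stationary for the replicator dynamics, and your argument says nothing about what happens to the non-best vertices once $\lambda>0$. The paper's proof (via the cited result) addresses exactly this: it notes that under the perturbation any pure strategy other than the best one ceases to be stationary. Concretely, evaluating $\overline{\mathbf{g}}_{\vpq,\infty}^{\lambda}$ at a non-best vertex $\ve_{i'}$ gives $\overline{\mathbf{g}}_{\vpq,\infty}^{\lambda}(\ve_{i'})[i^*]=r_{\vpq,\infty}(i^*)\,\lambda/|\cM|>0$, so these vertices are not equilibria. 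More globally, the stationarity relation you already recorded, $\vv_{\vpq}[j]\bigl(c-(1-\lambda)r_{\vpq,\infty}(j)\bigr)=(\lambda/|\cM|)\,r_{\vpq,\infty}(j)$, forces every coordinate to be strictly positive and, since $\sum_j \vv_{\vpq}[j]$ is strictly monotone in the scalar $c$ on the admissible range, pins down a single solution on the simplex; the same monotonicity argument handles~$\vw$. Adding this short step would close the gap and make your proof genuinely self-contained, which the paper's is not.
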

\begin{proof}
The statement follows directly from \cite[Proposition~3.5]{ChasparisShamma11_DGA} and the fact that for the perturbed ODE~(\ref{eq:ODE_Approximation}), the partition and combination of models corresponding to $(\vv^*,\vw^*)$ provides the highest reward compared to any other pair selection. Note further (also according to \cite[Proposition~3.5]{ChasparisShamma11_DGA}), that any other pure strategy (which is a stationary point under the unperturbed dynamics) no longer defines a stationary point for the perturbed dynamics.
\end{proof}

\subsection{Proof of Theorem~\ref{Th:GlobalConvergence}}

Let us define the set of pure strategy profiles by $\mathbf{\Delta}^*$. Let also $(\vv^*,\vw^*)\in\mathbf{\Delta}^*$ correspond to the best selection of Assumption~\ref{As:BestPartitionModels}. For some $\delta>0$, let $\mathcal{D}_{\delta}\df(\mathcal{V}\times\mathcal{W})\backslash\mathcal{N}_{\delta}(\mathbf{\Delta}^*),$ which includes all strategy pairs but a small neighborhood about $\mathbf{\Delta}^*$. By Assumption~\ref{As:BestPartitionModels}, any pair $(\vv,\vw)\in\mathcal{D}_{\delta}$ satisfies ${\vw^*}\tr\overline{\mathbf{R}}_{\infty}(\vv^*) > \vw\tr\overline{\mathbf{R}}_{\infty}(\vv)$. This is because, (a) $\overline{R}_{\infty}(\vp^*,\vv^*) > \overline{R}_{\infty}(\vp,\vv^*)$ (according to (H2) of Assumption~\ref{As:BestPartitionModels}), and (b) $\overline{R}_{\infty}(\vp,\vv^*) > \overline{R}_{\infty}(\vp,\vv)$ (according to (H1) condition of Assumption~\ref{As:BestPartitionModels}). In particular,
\begin{eqnarray*}
\begin{array}{rll}
{\vw^{*}}\tr\overline{\mathbf{R}}_{\infty}(\vv^*) & = \overline{R}_{\infty}(\vp^*,\vv^*) & 
\cr
& > \vw\tr\sum_{\vp\in\cP}\ve_{\vp}\overline{R}_{\infty}(\vp,\vv^*) & 
\cr
& > \vw\tr\sum_{\vp\in\cP}\ve_{\vp}\overline{R}_{\infty}(\vp,\vv) & = \vw\tr\overline{\mathbf{R}}_{\infty}(\vv). 
\end{array}
\end{eqnarray*}
for all $\vw\neq \vw^*$ and due to $\vw^*\equiv \ve_{\vp^*}$ and conditions (H1), (H2).
Thus, ${\vw^*}\tr\overline{\mathbf{R}}_{\infty}(\vv^*) > \vw\tr\overline{\mathbf{R}}_{\infty}(\vv)$ for all $(\vv,\vw)\in\mathcal{D}_{\delta}$.

Define the nonnegative function $V:\mathcal{D}_{\delta}\mapsto[0,\infty),$ such that: $$V(\vv,\vw) \df {\vw^*}\tr \overline{\mathbf{R}}_{\infty}(\vv^*) - \vw\tr\overline{\mathbf{R}}_{\infty}(\vv).$$ Note that $\nabla_{\vv_{\vpq}}V(\vv,\vw) = - \vw_{\vp} \mathbf{r}_{\vpq,\infty}\tr,$  and $\nabla_{\vw}V(\vv,\vw) = -\overline{\mathbf{R}}_{\infty}(\vv)\tr.$ Thus, its time derivative satisfies:
\begin{eqnarray*}
\dot{V}(\vv,\vw) & = & -\sum_{\vp\in\cP}\vw_{\vp}\sum_{\alpha\in\Ap}\mathbf{r}_{\vpq,\infty}\tr \mathbf{V}(\vv_{\vpq})\mathbf{r}_{\vpq,\infty} - \cr && \overline{\mathbf{R}}_{\infty}(\vv)\tr\mathbf{W}(\vw)\overline{\mathbf{R}}_{\infty}(\vv) + O(\lambda),
\end{eqnarray*}
where $O(\lambda)$ denotes terms of order of $\lambda$. Note further that according to the definition of the matrices $\mathbf{V}$ and $\mathbf{W}$,\footnote{We abuse notation here by using $i$, $j$ as both the indices and the elements of the corresponding sets $\cM$ and $\cP$.}
\begin{eqnarray*}
\lefteqn{\mathbf{r}_{\vpq,\infty}\tr \mathbf{V}(\vv_{\vpq})\mathbf{r}_{\vpq,\infty} = }\cr && \sum_{i=1}^{\magn{\cM}}\sum_{j=1,j>i}^{\magn{\cM}}\vv_{\vpq}[i]\vv_{\vpq}[j]\left(r_{\vpq,\infty}(i)-r_{\vpq,\infty}(j)\right)^2 > 0,
\end{eqnarray*}
\begin{eqnarray*}
\lefteqn{\overline{\mathbf{R}}_{\infty}(\vv)\tr\mathbf{W}(\vw)\overline{\mathbf{R}}_{\infty}(\vv) =}\cr && \sum_{i=1}^{\magn{\cP}}\sum_{j=1,j>i}^{\magn{\cP}}\vw[i]\vw[j]\left(\overline{R}_{\infty}(i,\vv) - \overline{R}_{\infty}(j,\vv)\right)^2 > 0,
\end{eqnarray*}
for all $(\vv,\vw)\in\mathcal{D}_{\delta}$, since (a) $\vv_{\vpq}[i]\in(0,1)$, for all $\vp\in\cP$, $\alpha\in\Ap$ and $i\in\cM$, (b) $\vw[i]\in(0,1)$ for all $i\in\cP$, and (c) $r_{\vpq,\infty}(i)\neq r_{\vpq,\infty}(j)$ for all $i,j\in\cM$, $i\neq{j}$, according to Assumption~\ref{As:DistinctPerformances}. Thus, if we take $\lambda$ sufficiently small, $\dot{V}(\vv,\vw)<0$ for all $(\vv,\vw)\in\mathcal{D}_{\delta}$. Furthermore, since $V(\vv,\vw)>{0}$ in $\mathcal{D}_{\delta}$ we conclude that the set $\mathcal{N}_{\delta}(\mathbf{\Delta}^*)$ is globally asymptotically stable in the sense of Lyapunov. 

It remains to investigate convergence to the set of pure strategy profiles within $\mathcal{N}(\mathbf{\Delta}^*)$. 

(A) We first exclude convergence from any pure strategy profile in $\mathbf{\Delta}^*$ which does not correspond to the best selection $(\vv^*,\vw^*)$. Let us consider the vector field $\overline{\mathbf{g}}_{\vpq,\infty}^{\lambda}(\vv_{\vpq})$ in $\mathcal{N}_{\delta}(\vv',\vw')$ for some $(\vv',\vw')\in\mathbf{\Delta}^*$, i.e., $(\vv',\vw')$ corresponds to a pure strategy profile. The $i$th entry of this vector field, $i\in\{1,...,\magn{\cM}\}$, evaluated at some strategy profile $(\vv,\vw)\in\mathcal{N}_{\delta}(\vv',\vw')$ takes on the following form, 
\begin{eqnarray*}
\lefteqn{\overline{\mathbf{g}}_{\vpq,\infty}^{\lambda}(\vv_{\vpq})[i]} \cr & = & 
r_{\vpq,\infty}(i)\Big((1-\lambda)\vv_{\vpq}[i] + \frac{\lambda}{\magn{\cM}}\Big) - \cr && \sum_{j\in\cM}r_{\vpq,\infty}(j)\Big((1-\lambda)\vv_{\vpq}[j]+\frac{\lambda}{\magn{\cM}}\Big)\vv_{\vpq}[i]
\cr & \approx &  \Big(r_{\vpq,\infty}(i)-r_{\vpq,\infty}(i')\vv_{\vpq}[i']\Big)\vv_{\vpq}[i] + r_{\vpq,\infty}(i)\frac{\lambda}{\magn{\cM}}
\end{eqnarray*}
plus terms of order $\lambda\delta$ and $\delta^2$, for all $i\neq{i'}$, where $i'$ corresponds to the model in $\cM$ assigned probability one in the vector $\vv_{\vpq}'$. Let us take $\vv_{\vpq}'\neq\vv_{\vpq}^*$, i.e., $\vv_{\vpq}'$ does not correspond to the best selection. Evaluate the above expression at $i^*$ which corresponds to the model in $\cM$ assigned probability one in $\vv_{\vpq}^*$. Then, we have
\begin{eqnarray*}
\lefteqn{\overline{\mathbf{g}}_{\vpq,\infty}^{\lambda}(\vv_{\vpq})[i^*] \approx }\cr && \Big(r_{\vpq,\infty}(i^*)-r_{\vpq,\infty}(i')\vv_{\vpq}[i']\Big)\vv_{\vpq}[i^*] + r_{\vpq,\infty}(i^*)\frac{\lambda}{\magn{\cM}}
\end{eqnarray*}
plus terms of order $\lambda\delta$ and $\delta^2$. Given that $\vv_{\vpq}$ is taken within $\mathcal{N}_{\delta}(\vv',\vw')$, and the fact that $r_{\vpq,\infty}(i^*)>r_{\vpq,\infty}(i')>0$, there exists $\delta^*>0$ such that, for any $\delta<\delta^*$, we have: 
\begin{eqnarray*}
\Big(r_{\vpq,\infty}(i^*)-r_{\vpq,\infty}(i')\vv_{\vpq}[i']\Big)\vv_{\vpq}[i^*] + r_{\vpq,\infty}(i^*)\frac{\lambda}{\magn{\cM}} \geq \cr  \Big(r_{\vpq,\infty}(i^*)-r_{\vpq,\infty}(i')\Big)\vv_{\vpq}[i^*] + r_{\vpq,\infty}(i^*)\frac{\lambda}{\magn{\cM}} > 0.
\end{eqnarray*}
We conclude that, if we take $\lambda$ sufficiently small, the above expression strictly dominates terms of order $\lambda\delta$ and $\delta^2$, implying that $\overline{\mathbf{g}}_{\vpq,\infty}^{\lambda}(\vv_{\vpq})[i^*]>0$ within $\mathcal{N}_{\delta}(\vv',\vw')$, i.e., the vector field points towards the exterior of $\mathcal{N}_{\delta}(\vv',\vw')$. Thus, the set $\mathcal{N}_{\delta}(\vv',\vw')$ belongs to the region of attraction of $\mathcal{D}_{\delta}$ when we take $\delta$ and $\lambda$ sufficiently small. In other words, the set $\mathcal{N}_{\delta}(\vv^*,\vw^*)$ is globally asymptotically stable. 

(B) It remains to investigate the vector field $\overline{\mathbf{g}}_{\vpq,\infty}^{\lambda}(\vv_{\vpq})$ within $\mathcal{N}_{\delta}(\vv^*,\vw^*)$, i.e., within a small neighborhood of the best selection. By directly implementing Proposition~3.6 of \cite{ChasparisShamma11_DGA} to ODE's (\ref{eq:ODE_Approximation_v})--(\ref{eq:ODE_Approximation_w}), we have that there exists a locally asymptotically stable stationary point $(\tilde{\vv}^{\lambda},\tilde{\vw}^{\lambda})$ satisfying $\tilde{\vv}^{\lambda}\to\vv^*$ and $\tilde{\vw}^{\lambda}\to\vw^*$ as $\lambda\to{0}$.
Thus, if we pick $\lambda$ sufficiently small and $\delta=\delta(\lambda)$ such that $\mathcal{N}_{\delta}(\vv^*,\vw^*)$ belongs to the region of attraction of $(\tilde{\vv}_{\lambda},\tilde{\vw}_{\lambda})$, then $(\tilde{\vv}_{\lambda},\tilde{\vw}_{\lambda})$ will be the unique limit point within $\mathcal{N}_{\delta}(\vv^*,\vw^*)$.

From (A) and (B), we conclude that for sufficiently small $\lambda$, $(\tilde{\vv}_{\lambda},\tilde{\vw}_{\lambda})$ is globally asymptotically stable. Thus, according to Proposition~\ref{Pr:AsymptoticProperties}, the conclusion follows.

\section{Conclusions and Future Work} \label{sec:Conclusions}

We introduced a supervisory online prediction scheme specifically tailored for bilinear systems, which incorporates two parallel decision processes: a) the selection of a partition-pattern profile for the inputs' domain, and b) the selection of a model profile for each partition-pattern profile. We showed that convergence is attained (in probability) to the partition and model profile with the smallest prediction error. Such supervisory scheme is appropriate when the operating conditions might change with time, as in the case of the thermal dynamics in buildings, and when predictions need to be provided online as in standard model predictive control formulations, thus automating the process of prediction model selection.




%


\bibliographystyle{IEEETran}
\bibliography{Bibliography}

\end{document}